\newtheorem{theorem}{Theorem}
\newtheorem{pro}[theorem]{Proposition}
\newtheorem{lem}[theorem]{Lemma}
\newtheorem{cor}[theorem]{Corollary}
\newcommand{\R}{\mathbb{R}}
\title{\LARGE \bf
Formation Control with Triangulated Laman Graphs}
\author{Xudong Chen, M.-A. Belabbas, Tamer Ba\c sar
\thanks{Xudong Chen, M.-A. Belabbas, Tamer Ba\c sar are with the Coordinated Science Laboratory, University of Illinois at Urbana-Champaign, emails: \{xdchen, belabbas, basar1\}@illinois.edu}%
}
\begin{document}

\maketitle
\thispagestyle{empty}
\pagestyle{empty}

\begin{abstract}

Formation control deals with the design of decentralized control laws that stabilize agents at prescribed distances from each other. We call any configuration that satisfies the inter-agent distance conditions a target configuration. It is well known that when the distance conditions are defined via a rigid graph, there is a finite number of target configurations modulo rotations and translations. We can thus recast the objective of formation control as stabilizing one or many of the target configurations. A major issue is that such control laws will also have   equilibria corresponding to configurations which do not meet the desired inter-agent distance conditions; we refer to these as undesired equilibria. The undesired equilibria become problematic if they are also stable. Designing decentralized control laws  whose  stable equilibria are all target configurations in the case of a general rigid graph is still an open problem. We propose here a partial solution to this problem by exhibiting  a  class of rigid graphs and  control laws for which all stable equilibria are  target configurations. 
\end{abstract}

\section{Introduction}

The design of control laws stabilizing a group of mobile autonomous agents has raised a number of issues related to the number and the type of equilibria and their relations to the level of decentralization of the system. In rigidity-based formation control, one assigns agents to the vertices of a  rigid graph 
and specifies the target distances between the pairs of agents linked by  edges. We refer to any configuration of the agents that satisfies these distance requirements as a target configuration. The rigidity of the graph thus ensures that there is a finite number of target configurations up to rotations and translations of the plane. A decentralized formation control law is thus designed to either locally or globally stabilize a subset of the target configurations. However, the decentralization constraints and geometry of the state-space make the appearance of ancillary, undesired configurations inevitable~\cite{AB2013TAC}.
We call a control law {\bf essentially stabilizing} if it only stabilizes  target configurations. 

The relationship between the level of decentralization and the existence of essentially stabilizing control laws has been studied in~\cite{AB2013TAC}, where it was shown that a certain pattern in the information flow of a formation control systems implied the existence of undesired yet stable equilibria. In~\cite{AL2014ECC}, it was shown that one could not locally stabilize all target configurations for a class of directed formations.  Among positive results, it was shown in~\cite{anderson2007control} that the triangle formation was essentially stabilizable and in~\cite{cao2008control} that a class of \emph{acyclic} directed formations was similarly essentially stabilizable. These problems are challenging, and classifying the rigid graphs for which there exists an essentially stabilizing control law is still open. The contribution of this paper is to exhibit a class of \emph{undirected graphs}, termed triangulated Laman graphs, and an associated class of essentially stabilizing control laws for which all stable equilibria are target configurations.

We now describe the model precisely. Let $G=(V,E)$ be an undirected graph with vertex set $V:=\{1,\cdots,N\}$  and edge set $E$. Two vertices are said to be adjacent if there is an edge joining them. We denote by $V_i$ the set of vertices adjacent to vertex $i$.  Let $\vec x_i \in \R^2$, $i=1,\ldots,N$ be the coordinate of agent $i$. With a slight abuse of notation, we will sometimes refer to agent $i$ as agent $\vec x_i$. For every edge $(i,j) \in E$, we let $d_{ij}$ be the distance between agents $i$ and $j$: $$d_{ij}:= \|\vec x_i-\vec x_j\|.$$ We denote by $\bar d_{ij}$ the {\bf target distance} for  $(i,j)\in E$. 

The equations of motion of the $N$ agents $\vec x_1,\cdots,\vec x_N$ in $\mathbb{R}^2$ are given by       
\begin{equation}\label{MODEL}
\dot{\vec x}_{i} = \sum_{j\in V_i}u(d_{ij}, \bar d_{ij})\cdot (\vec x_j-\vec x_i),\hspace{10pt} \forall i= 1,\cdots,N 
\end{equation}
The function $u(d_{ij}, \bar d_{ij})$ is assumed to be jointly continuously differentiable in terms of both arguments. For a fixed $\bar d_{ij}>0$, the function $u(\cdot, \bar d_{ij})$ is monotonically increasing, and it has a unique zero at $\bar d_{ij}$, i.e.,
\begin{equation}
u(d_{ij},\bar d_{ij}) = 0
\end{equation} 
In other words, if all pairs of agents $\vec x_i$ and $\vec x_j$, with $(i,j)\in E$, reach their target distance, then the entire formation is at an equilibrium. For simplicity of exposition, we  assumed in this paper that  the control function $u$ is the same for every pair $(i,j) \in E$. The result however holds for the  general case where different control laws $u_{ij}$'s are used by different pairs of adjacent agents, provided they satisfy the conditions above.

It is known that the dynamics~\eqref{MODEL} is a gradient dynamics (we introduce the potential function in the next section). We can thus rephrase our goal of obtaining an essentially stabilizing control law as designing a potential function whose local minima are all target configurations. 

The undirected formation control model \eqref{MODEL} has been investigated from various perspectives.  Questions concerning the level of interaction laws for organizing such systems~\cite{krick2009, GP, XC2014ACC}, questions about system convergence~\cite{XC2014ACC}, and questions about local stability~\cite{krick2009} and robustness~\cite{AB2012CDC,sun2014CDC,USZB,mou2014CDC} have all been treated to some degree for the case of gradient dynamics. 
Recently, the problem of counting the number of stable equilibria was also addressed in \cite{BDO2014CT,UH2013E}. In general, this is a hard question. For example, even counting the number of equilibria for the gradient formation control system in one dimension is challenging~\cite{BDO2014CT}.

Following this introduction, we proceed as follows. In section II, we  describe  preliminary results about the gradient formation control system. In particular, we will recall some known facts about system convergence and the equivariance of the potential function. In section III, we  introduce the triangulated Laman graph, and then state and prove the main results of this paper. In particular, we introduce in Section~\ref{ssec:morsebott} a formula which can be used to compute the so-called Morse-Bott index of a critical orbit, that allows us to study the type of extremal trajectories of the potential function,  which might be of independent interest. We provide concluding remarks in Section IV, and the paper ends with an Appendix.

\section{Preliminary Results}

\subsection{The control laws and the system convergence}
Let $G=(V,E)$ be an undirected graph with $N$ vertices. We define the  {\it configuration space}  $P_{G}$ of the system as 
\begin{equation}
P_{G}:=\left\{(\vec x_1,\cdots,\vec x_N)\in \mathbb{R}^{2\times N}\big |\vec x_i\neq \vec x_j, \forall (i,j)\in E \right\} 
\end{equation} 
Equivalently, $P_{G}$ is the set of embeddings of the graph $G$ in $\mathbb{R}^2$ whose adjacent vertices have distinct positions. We call a pair $(G,p)$ a {\bf framework}. We  now introduce the class of control laws that is studied in this paper. Let $\mathbb{R}_+$ be the set of positive real numbers, and let $C^1(\mathbb{R}_+,\mathbb{R})$  be the set of continuously differentiable functions from $\mathbb{R}_+$ to $\mathbb{R}$. For fixed $\bar d_{ij}$, the interaction law $u(\cdot, \bar d_{ij})$ can be viewed as an element  in $C^1(\mathbb{R}_+,\mathbb{R})$, and for convenience,  we let 
\begin{equation}\label{fij}
f_{ij}(d) := u(d,\bar d_{ij})
\end{equation}
Denote by {\bf $\mathcal{U}$ the set of functions} $f \in C^1(\mathbb{R}_+,\mathbb{R})$ satisfying the next two conditions:
\begin{itemize}
\item[C1.]  For any  $x>0$, we have  
\begin{equation}
\frac{d}{dx}(xf(x))>0
\end{equation}
and $f(x)$ has a unique zero. 
\item[C2.]  $\lim_{x\to 0}\displaystyle\int^1_{x} t f(t)dt=-\infty$. 
\end{itemize}
The formation control system considered in the paper is then equipped with control laws $u(\cdot,\bar d_{ij}) \in {\cal U}$ for all $\bar d_{ij}$. An example of such a control law is:  
\begin{equation}
u(\|\vec x_i-\vec x_j\|, \bar d_{ij}) =  \frac{\|\vec x_i-\vec x_j\|^2-\bar {d_{ij}}^2}{\|\vec x_i - \vec x_j\|^2}. 
\end{equation}
which is similar to the gradient control law~\cite{krick2009} scaled by $1/\|x_i-x_j\|^2$.

Note that the function $xf(x)$ appears in condition C1 because if $f$ is an interaction law between a pair of agents, then $xf(x)$ represents the actual  attraction/repulsion between them. We impose these two conditions because the first condition implies that the interaction is a monotonically increasing function, so it is a repulsion at a short distance, and an attraction at a long distance. The second condition prevents collisions of adjacent agents along the evolution, so then the solution of system \eqref{MODEL}, with any initial condition in $P_{G}$, exists for all time.    Moreover, we have shown in \cite{XC2014ACC} that if each interaction law $f_{ij}$ satisfies conditions C1 and C2, then all critical orbits of system \eqref{MODEL} are contained in a compact subset of $P_{G}$. Moreover, by assuming $f_{ij}\in  \mathcal{U}$, we have the global convergence of the formation control system \eqref{MODEL} as stated below. 

\begin{lem}\label{CONVGG}
If each $f_{ij}$ is in $\mathcal{U}$, then the set of equilibria of system \eqref{MODEL} is a compact subset of $P_{G}$. Furthermore, for any initial condition $p(0)\in P_{G}$, the solution $p(t)$ of system \eqref{MODEL} converges to the set of equilibria. 
\end{lem}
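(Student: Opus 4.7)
The plan is to exploit the fact that~\eqref{MODEL} is a gradient system and then invoke LaSalle's invariance principle. First I would define the potential
\[
\Phi(p) \;=\; \sum_{(i,j)\in E}\Psi_{ij}(d_{ij}), \qquad \Psi_{ij}(d) := \int_{\bar d_{ij}}^{d} t\, f_{ij}(t)\,dt,
\]
and verify by direct computation, using $\nabla_{\vec x_i} d_{ij} = -(\vec x_j-\vec x_i)/d_{ij}$, that~\eqref{MODEL} reads $\dot{\vec x}_i = -\nabla_{\vec x_i}\Phi$. Hence the equilibria of~\eqref{MODEL} are precisely the critical points of $\Phi$.

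Condition~C1 states that $t f_{ij}(t)$ is strictly increasing with unique zero at $\bar d_{ij}$, so $\Psi_{ij}$ is strictly convex on $\R_+$ with minimum value $0$ attained only at $d=\bar d_{ij}$, and $\Psi_{ij}(d)\to+\infty$ as $d\to+\infty$ (its derivative is eventually bounded below by a positive constant); condition~C2 is, essentially by definition, the statement that $\Psi_{ij}(d)\to+\infty$ as $d\to 0^+$. Thus each $\Psi_{ij}$ is proper and nonnegative on $\R_+$. Next I observe that the centroid $\tfrac1N\sum_i\vec x_i$ is conserved by~\eqref{MODEL} (the edge contributions cancel pairwise), so one may restrict attention to the centroid-fixed slice $S_0 := \{\,p\in P_G : \sum_i \vec x_i=0\,\}$. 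On $S_0$ every sublevel set of $\Phi$ is compact: properness of each $\Psi_{ij}$ confines $d_{ij}$ for $(i,j)\in E$ to a compact subinterval of $\R_+$, and connectedness of~$\G$ together with the centroid constraint then yields a uniform bound on every $\|\vec x_i\|$.

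The first assertion now follows because the critical points of $\Phi$ form a closed subset of $P_G$ and are contained in a compact subset of $P_G$ by the result of~\cite{XC2014ACC} already cited in the text. For the second assertion, along any trajectory $\dot\Phi(p(t)) = -\sum_i\|\nabla_{\vec x_i}\Phi\|^2\le 0$, with equality exactly at equilibria; so $\Phi(p(t))$ is non-increasing and bounded below, and the trajectory (translated to $S_0$) remains in the compact sublevel set $\{\,p\in S_0:\Phi(p)\le \Phi(p(0))\,\}$. LaSalle's invariance principle then gives convergence of $p(t)$ to the largest invariant subset of $\{\dot\Phi=0\}$, which is exactly the set of equilibria. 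The main obstacle I anticipate is bookkeeping the translation symmetry of $\Phi$: without restricting to the centroid-fixed slice the sublevel sets are not compact in $P_G$, so the use of the conserved centroid is essential for turning properness of the $\Psi_{ij}$ into genuine compactness for both parts of the lemma.
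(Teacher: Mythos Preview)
Your argument is correct and follows the standard route for gradient formation systems: verify the gradient structure, use C1 and C2 to make each $\Psi_{ij}$ proper on $\mathbb{R}_+$, exploit the conserved centroid to obtain compact sublevel sets on the zero-centroid slice, and conclude with LaSalle. The only point worth flagging is the first assertion as literally stated: in $P_G$ the equilibrium set is a union of $SE(2)$-orbits and hence not bounded under translations, so compactness only makes sense after fixing the centroid (or modulo translations). You handle this correctly by passing to $S_0$; just be explicit that the compactness claim is to be read on that slice.

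As for comparison with the paper: there is nothing to compare. The paper does not prove Lemma~\ref{CONVGG}; it is imported from~\cite{XC2014ACC}, and the surrounding text merely records that C1 and C2 guarantee the critical orbits lie in a compact set and that global convergence holds. Your write-up is therefore a self-contained expansion of what the paper takes for granted, and it matches the mechanism the paper alludes to (gradient structure of~\eqref{MODEL} with potential~\eqref{PHI}, the role of C2 in preventing collisions, and the compactness result cited from~\cite{XC2014ACC}).
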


\subsection{The potential function and its invariance}

An important property of the class of systems~\eqref{MODEL} is that they are gradient flows. The associated potential function is given by 
\begin{equation}\label{PHI}
\Phi(\vec x_1,\cdots,\vec x_N) := \sum_{(i,j)\in E}\displaystyle\int^{d_{ij}}_{1}tf_{ij}(t)dt
\end{equation}
Note that the potential function $\Phi$ depends only on relative distances between agents, thus it is invariant if we translate and/or rotate the entire configuration in $\R^2$. We will now describe this property in precise terms.

The special Euclidean group $SE(2)$ has a natural action  on the configuration space. Recall that  $\gamma$ in $SE(2)$ can be represented by a pair $(\theta, \vec v)$ with $\theta$ a rotation matrix, 
and $\vec v$ a vector in $\mathbb{R}^2$.  With this representation, the  multiplication of two elements $\gamma_1 = (\theta_1,\vec v_1)$ and $\gamma_2 = (\theta_2,\vec v_2)$ of $SE(2)$ is given by
\begin{equation}
\gamma_2\cdot \gamma_1  = (\theta_2 \theta_1, \theta_2 \vec v_1 + \vec v_2)
\end{equation}
The action of $SE(2)$  on $P_{G}$ mentioned above is defined as follows: given  $\gamma = (\theta,\vec v)$ in $SE(2)$ and  $p=(\vec x_1,\cdots, \vec x_N)$ in $P_{G}$ we let
\begin{equation}\label{def:groupaction}
\gamma\cdot p:=(\theta \vec x_1+\vec v,\cdots,\theta \vec x_N+\vec v)
\end{equation} 
We denote by $\mathcal{O}_p$ the orbit of $SE(2)$ through  $p \in P_{G}$: 
\begin{equation}
\mathcal{O}_p:= \{ q \in P_G \mid q= \gamma \cdot p \mbox{ for some } \gamma \in SE(2)\}.
\end{equation} 
The potential function $\Phi$  keeps the same value over $\mathcal{O}_p$:
\begin{equation}
\Phi(p) = \Phi(\gamma\cdot p)
\end{equation} 
for any $p\in P_{G}$ and any $\gamma\in SE(2)$. Denote by $\nabla \Phi$ the gradient of $\Phi$. An immediate consequence of the invariance of $\Phi$ under the group action~\eqref{def:groupaction}  is that
\begin{equation}
\nabla\Phi(\gamma\cdot p) = \operatorname{diag}(\theta,\cdots, \theta) \cdot \nabla\Phi(p)
\end{equation}
where  $\operatorname{diag}(\theta,\cdots,\theta)$ is a block diagonal matrix with $N$ copies of $\theta$.  Since $\operatorname{diag}(\theta,\cdots, \theta)$ is invertible, when $p$ is an equilibrium of system \eqref{MODEL}, then so is $p'$ in $\mathcal{O}_p$.  We  thus refer to the orbit $\mathcal{O}_p$  as a {\it critical orbit} if $\nabla \Phi(p)=0$. 
Let $\mathcal{O}_p$ be a critical orbit, and let $H_p$ be the Hessian matrix of $\Phi$ at $p$, i.e., 
\begin{equation}\label{eqhess}
H_p:= \frac{\partial^2 \Phi(p)}{\partial p^2}
\end{equation}
The following Lemma presents  well-known facts about the Hessian matrix of an invariant function:

\begin{lem}\label{LEQUIV}
Let $\Phi:P \rightarrow \R$ be  a function invariant  under a Lie-group action over a Euclidean space. Denote by $k$ the dimension of a critical orbit $\mathcal{O}_p$ under the group action and denote by  $H_{p}$ be  the Hessian of $\Phi$ at $p$.  Then for any $p_1, p_2 \in \mathcal{O}_p$, the eigenvalues of $H_{p_1}$ and $H_{p_2}$ are the same.  In addition, the Hessian $H_{p}$  has at least $k$ zero eigenvalues. The null space of $H_p$ at least contains the tangent space of $\mathcal{O}_p$ at $p$. 
\end{lem}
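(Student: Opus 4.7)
\bigskip

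\noindent\textbf{Proof plan for Lemma \ref{LEQUIV}.}

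The plan is to derive all three claims from differentiating the two identities that express the invariance of $\Phi$ under the group action. Throughout, let $\psi_\gamma:P\to P$ denote the diffeomorphism $p\mapsto \gamma\cdot p$, so that invariance reads $\Phi\circ\psi_\gamma=\Phi$ for every $\gamma$ in the Lie group $G$.

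I would begin with the third claim, which also implies the second. For a Lie algebra element $X\in \mathfrak{g}$, let $\xi_X$ denote the corresponding infinitesimal generator, the vector field on $P$ defined by $\xi_X(q)=\tfrac{d}{dt}\bigl|_{t=0}\exp(tX)\cdot q$. Differentiating the identity $\Phi(\exp(tX)\cdot q)=\Phi(q)$ once in $t$ gives the pointwise equation $\langle\nabla\Phi(q),\xi_X(q)\rangle=0$ for every $q\in P$. Since this holds on an open set, I can differentiate once more with respect to $q$ in an arbitrary direction $w$, which yields
\begin{equation*}
\langle H_p\, w,\,\xi_X(p)\rangle+\langle \nabla\Phi(p),\,D\xi_X(p)\cdot w\rangle=0.
\end{equation*}
At a critical point $\nabla\Phi(p)=0$, so the second term vanishes and $\langle H_p\,w,\,\xi_X(p)\rangle=0$ for every $w$. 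Symmetry of $H_p$ then gives $H_p\,\xi_X(p)=0$. As $X$ ranges over $\mathfrak{g}$, the vectors $\xi_X(p)$ sweep out the tangent space $T_p\mathcal{O}_p$, which therefore sits inside $\ker H_p$. This proves claim three, and since $\dim T_p\mathcal{O}_p=k$, claim two follows immediately.

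For the first claim, I would differentiate the identity $\Phi\circ\psi_\gamma=\Phi$ twice using the chain rule, obtaining
\begin{equation*}
H_p=(D\psi_\gamma)^{\!\top} H_{\gamma\cdot p}(D\psi_\gamma)+\langle \nabla\Phi(\gamma\cdot p),\,D^2\psi_\gamma\rangle.
\end{equation*}
At a critical point the last term disappears, so $H_p$ and $H_{\gamma\cdot p}$ are congruent via $D\psi_\gamma$. In the setting of this paper the action is by isometries: $\psi_\gamma$ is affine with derivative $\operatorname{diag}(\theta,\ldots,\theta)$, which is orthogonal. Congruence by an orthogonal matrix is similarity, so $H_p$ and $H_{\gamma\cdot p}$ share their full spectrum, which proves claim one.

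The only subtle point is claim one, since for a general Lie-group action congruence would preserve only the signature of the Hessian and not the individual eigenvalues. The proof therefore hinges on the fact that the relevant action in this paper (and in the implicit statement of the lemma) is by isometries, which makes $D\psi_\gamma$ orthogonal; the other two claims require only smooth invariance and go through verbatim.
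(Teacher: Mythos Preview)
The paper does not actually prove Lemma~\ref{LEQUIV}; it introduces it with the phrase ``The following Lemma presents well-known facts about the Hessian matrix of an invariant function'' and then simply states it without argument. There is therefore no proof in the paper to compare your proposal against.

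Your argument is the standard one and is correct. The derivation of $H_p\,\xi_X(p)=0$ by differentiating the identity $\langle\nabla\Phi(q),\xi_X(q)\rangle\equiv 0$ and evaluating at a critical point is exactly right, and it cleanly gives both the null-space claim and the eigenvalue count. For claim one, your observation is important and worth keeping: as literally stated, the lemma asserts equality of eigenvalues for an arbitrary Lie-group action, but the second-derivative identity only yields congruence of Hessians via $D\psi_\gamma$, which in general preserves signature but not spectrum. The conclusion about eigenvalues holds here because in the paper's setting $D\psi_\gamma=\operatorname{diag}(\theta,\ldots,\theta)$ is orthogonal (the action is by isometries), so congruence upgrades to similarity. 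You have identified and handled this correctly; the lemma as written in the paper is slightly overstated in generality, and your proof supplies the right hypothesis.
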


In our case, each critical orbit $\mathcal{O}_p$ for $p\in P_{G}$ is of dimension $3$. Let $n_0(H_p)$ be the number of zero eigenvalues of $H_p$. From Lemma \ref{LEQUIV}, we have $n_0(H_p) \ge 3$.  A critical orbit $\mathcal{O}_p$ is said to be {\it nondegenerate} if  $n_0(H_p) = 3$. A potential function $\Phi$ is said to be an {\bf equivariant Morse function} if there are only \emph{finitely} many critical orbits, and moreover each critical orbit  is nondegenerate.

\section{Triangulated Laman Graphs, Independent Partitions and The Morse-Bott Index Formula}
\subsection{Triangulated Laman Graphs}
Let $G=(V,E)$ be an undirected graph. Let the distance function $\rho: P_{G}\to \mathbb{R}^{|E|}_+$ be defined by  
\begin{equation}\label{maprho}
\rho: p \mapsto (\cdots, \|\vec x_i - \vec x_j\|^2,\cdots)
\end{equation}
The graph $G$ is called {\it rigid} in $\mathbb{R}^2$ if for almost all $d \in \R^{|E|}_+$, the pre-image $\rho^{-1}(d)$ is a finite set modulo translations and rotations. 
The graph $G$ is called {\it minimally rigid} if it is not rigid  after taking out any of its edges~\cite{laman1970}. A {\bf Laman graph} is a minimal rigid graph in $\R^2$.
 
It is well known that every Laman graph can be obtained via a so-called \emph{Henneberg sequence}; a Henneberg sequence $\{G_i\}$ is a sequence of minimally rigid graphs obtained via two basic operations: edge split and vertex add. Precisely, start with a graph $G_0$ of only two vertices joined by an edge. Then the graph $G_i$ has $(i+2)$ vertices and is obtained from $G_{i-1}$ by applying one of the two operations. We refer to~\cite{graver1993combinatorial} for more details about these operations. We define {\bf triangulated Laman graphs} as those graphs obtained by imposing constraints on the type of operations allowed: we start with a graph $G_0$ with two vertices connected by one edge. The graph $G_i$ in the sequence is obtained from $G_{i-1}$ by adding  a  vertex and attaching it to two \emph{adjacent} vertices with two new edges. In other words, only the operation of vertex-add is allowed in the Henneberg construction, and in addition, the new vertex cannot be adjacent to two arbitrary vertices, but rather to two vertices connected by an existing edge. See Figure \ref{Hcon} for an illustration.

\begin{figure}[h]
\begin{center}
\includegraphics[scale=.4]{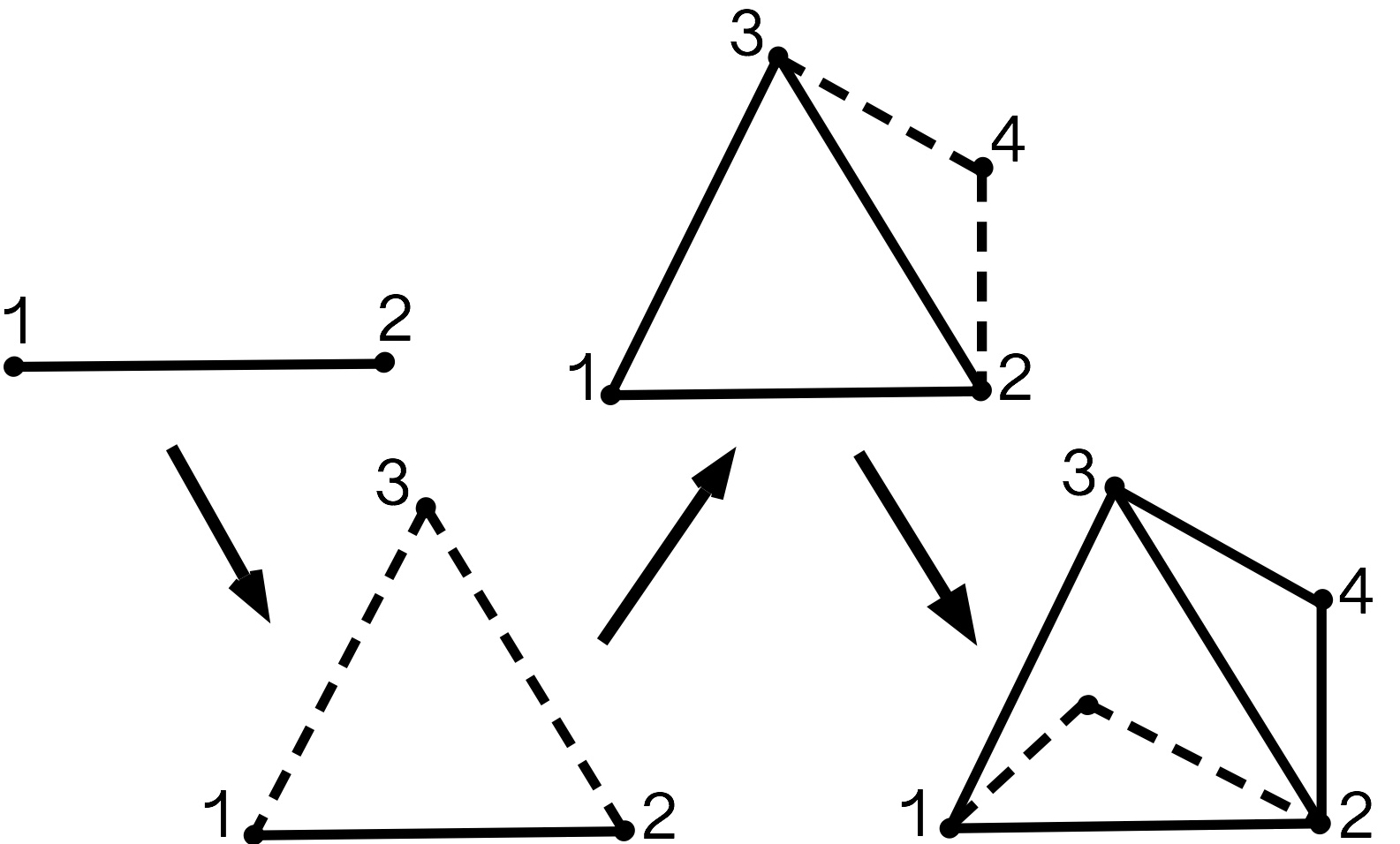}
\caption {An example of a triangulated Laman graph. Start with edge (1,2), then subsequently join vertices 3, 4 and 5 to two existing adjacent vertices.}
\label{Hcon}
\end{center}
\end{figure}

Let $G$ be a  triangulated Laman graph. We say that a subgraph $G'$  of $G$ is a {\it 3-cycle} if $G'$ is a complete graph of three vertices. In graph theory, an {\it induced cycle} of a graph $G$ is a cycle that is an induced subgraph of $G$. If $G$ is a triangulated Laman graph, then all induced cycles of $G$ are the 3-cycles.      
A framework $(G,p)$ is said to be {\it strongly rigid} (or simply $p$ is strongly rigid) if $p$ satisfies the following condition: if vertices $i$, $j$ and $k$ of $G$ form a 3-cycle of $G$, then the triangle formed by agents $\vec x_i$, $\vec x_j$ and $\vec x_k$ is nondegenerate, i.e.,  $\vec x_i$, $\vec x_j$ and $\vec x_k$ are not belong to a one-dimensional subspace of $\mathbb{R}^2$.  If $p$ is strongly rigid, then so is any $p'\in \mathcal{O}_p$.  

 Let $\rho: P_{G}\to \R^{|E|}_+$ be defined by Eq. \eqref{maprho}. A framework $(G,p)$ is said to be {\it infinitesimally rigid}~\cite{graver1993combinatorial} (or simply, $p$ is infinitesimally rigid) if the null space of the Jacobian of $\rho$ at $p$ (i.e., $d\rho(p)/dp$)  is of dimension three. We state below a fact without proof:

\begin{lem}
Strongly rigid configurations are infinitesimally rigid, Moreover, they form an open and dense subset of $P_{G}$.  
\end{lem}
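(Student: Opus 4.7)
The plan is to establish the two assertions separately.

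For the first claim (strong rigidity implies infinitesimal rigidity), I would induct on the length of a Henneberg sequence $\{G_i\}_{i=0}^{M}$ that constructs the triangulated Laman graph $G = G_M$. The base case $G_0$ is a single edge, for which the Jacobian $d\rho/dp$ is a $1\times 4$ matrix of rank one (since $\vec x_1\neq \vec x_2$ on $P_{G_0}$), giving a null space of dimension three. For the inductive step, suppose $G_{k-1}$ is infinitesimally rigid at the restriction of $p$ to its vertex set; note that this restriction is itself strongly rigid, since every 3-cycle of $G_{k-1}$ is a 3-cycle of $G$. Let $G_k$ be obtained by attaching a new vertex $v$ to two adjacent vertices $u_1, u_2$ of $G_{k-1}$. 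After reordering rows so the two new edges come last and columns so the coordinates of $v$ come last, the Jacobian $R_k := d\rho_k/dp$ takes the block form
\begin{equation*}
R_k = \begin{pmatrix} R_{k-1} & 0 \\ A & B \end{pmatrix},
\end{equation*}
where the two rows of the $2\times 2$ block $B$ are $(\vec x_v-\vec x_{u_1})^\top$ and $(\vec x_v-\vec x_{u_2})^\top$.

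Given any $w\in\ker R_k$, its restriction $w'$ to the old vertices lies in $\ker R_{k-1}$, which by induction coincides with the three-dimensional tangent space to the $SE(2)$-orbit. The natural extension of this rigid motion to vertex $v$ produces a vector $\tilde w\in\ker R_k$ agreeing with $w$ on the old vertices, so $w-\tilde w$ is supported only at $v$, with velocity $\delta w_v$ satisfying $B\,\delta w_v=0$. Strong rigidity at the 3-cycle $\{u_1,u_2,v\}$ forces $\vec x_v-\vec x_{u_1}$ and $\vec x_v-\vec x_{u_2}$ to be linearly independent, so $B$ is nonsingular and $\delta w_v=0$. Thus $\ker R_k$ is exactly the three-dimensional $SE(2)$-tangent space, completing the induction.

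For the second claim, for each 3-cycle $\{i,j,k\}$ of $G$ the collinearity of $\vec x_i,\vec x_j,\vec x_k$ is the vanishing of the $2\times 2$ determinant $\det[\vec x_j-\vec x_i \mid \vec x_k-\vec x_i]$, a nonzero polynomial in the coordinates. Its zero set is a proper closed algebraic subvariety of $P_G$ and is therefore nowhere dense. Since $G$ has only finitely many 3-cycles, the set of non-strongly-rigid configurations is a finite union of such subvarieties; its complement, the strongly rigid configurations, is open and dense. The main obstacle throughout is the inductive step of the first claim: it rests on recognizing the block-triangular structure of the rigidity matrix induced by a triangulated vertex-add, and on the fact that the $2\times 2$ block $B$ is nonsingular precisely when strong rigidity holds at the newly created 3-cycle. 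Once this geometric observation is in place, both the linear-algebra verification and the density argument are routine.
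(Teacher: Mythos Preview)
The paper explicitly omits the proof of this lemma (it is introduced with ``We state below a fact without proof''), so there is no argument in the paper to compare against. Your proposal is correct and fills this gap in a standard way.

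Two minor remarks. First, in the inductive step you implicitly use that the restriction of $p$ to the vertices of $G_{k-1}$ lies in $P_{G_{k-1}}$; this holds because $E_{k-1}\subset E$ and $p\in P_G$, but it is worth stating. Second, your description of the block $B$ drops the factor of $2$ coming from differentiating $\|\vec x_v-\vec x_{u_i}\|^2$; this is harmless for the rank argument, but if you want the rows to literally equal $(\vec x_v-\vec x_{u_i})^\top$ you should either note the scaling or work with $\|\cdot\|$ rather than $\|\cdot\|^2$. The core observation---that the triangulated vertex-add yields a block-lower-triangular rigidity matrix whose new $2\times2$ diagonal block is invertible exactly when the new $3$-cycle is nondegenerate---is precisely what is needed, and the density argument via nonzero polynomial determinants is routine and correct.
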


Let $p$ be a strongly rigid configuration, and let $d_{ij}$ be the Euclidean distance between $\vec x_i$ and $\vec x_j$ in $p$. Suppose vertices $i$, $j$ and $k$ form a 3-cycle of $G$, then
\begin{equation}\label{TRINE}
\left\{
\begin{array}{l}
d_{ij} + d_{ik} > d_{jk}\\
d_{ij} + d_{jk} > d_{ik}\\
d_{ik} + d_{jk} > d_{ij}
\end{array}\right.
\end{equation}
We say the set $\{d_{ij}| (i,j)\in E\}$ satisfies the {\bf triangle inequalities} associated with $G$. If the set of desired distances $\{\bar d_{ij} | (i,j)\in E \}$  satisfies the triangle inequalities, then there are strongly rigid configurations satisfying the condition that $d_{ij} = \bar d_{ij}$ for all $(i,j)\in E$; indeed, by following a Henneberg construction, we see that there are $2^{N-2}$ strongly rigid orbits of configuration satisfying these conditions. This exponential relation has also been explored in directed formations~\cite{JB2006cdc}.

We now state in precise terms the main result of this paper.

\begin{theorem}\label{MAIN}
Let $G=(V,E)$ be a triangulated Laman graph and let the target distances $\{\bar d_{ij}|(i,j)\in E\}$  satisfy the triangle inequalities associated with $G$. Let $u(\cdot, \bar d_{ij})\in \mathcal{U}$, for all $(i, j)\in E$, be such that the potential function $\Phi$ defined in~\eqref{PHI} is an equivariant Morse function. 
Then, 
\begin{itemize}
\item[1.] A critical orbit $\mathcal{O}_p$ is (exponentially) stable if and only if it is strongly rigid. There are $2^{N-2}$ stable critical orbits each of which satisfies the condition that $d_{ij}=\bar d_{ij}$ for all $(i,j)\in E$.
\item[2.] For almost all initial conditions $p(0)\in P_{G}$, the solution $p(t)$ of system \eqref{MODEL} converges to one of the $2^{N-2}$ stable critical orbits. 
\end{itemize}
\end{theorem}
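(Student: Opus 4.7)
The plan is to structure both claims around the Henneberg construction of $G$. For Part 1, I aim to establish four items: (a) every strongly rigid critical orbit satisfies $d_{ij}=\bar d_{ij}$ for all $(i,j)\in E$; (b) such orbits are exponentially stable on the quotient by $SE(2)$; (c) every non-strongly-rigid critical orbit admits an unstable direction; and (d) the count of strongly rigid target orbits equals $2^{N-2}$. Item (d) is essentially immediate from the Henneberg construction: starting from $G_0$ (a single edge) and adding one vertex at a time, each new vertex, attached by two edges to an already-present edge of target distances satisfying the triangle inequality, has exactly two admissible placements (one on each side of the supporting line), producing $2^{N-2}$ orbits in $P_G/SE(2)$.

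For (a), I peel vertices from the end of the Henneberg sequence. At the last-added vertex $n$, with neighbors $a,b$, the gradient-zero condition reads $f_{an}(d_{an})(\vec x_a-\vec x_n)+f_{bn}(d_{bn})(\vec x_b-\vec x_n)=0$; strong rigidity makes the two edge vectors linearly independent, forcing $f_{an}(d_{an})=f_{bn}(d_{bn})=0$, and uniqueness of the zero in C1 then yields $d_{an}=\bar d_{an}$, $d_{bn}=\bar d_{bn}$. Dropping $n$ leaves a strongly rigid critical orbit on $G_{n-1}$ to which the same argument applies. For (b), at such a target configuration every $f_{ij}(d_{ij})$ vanishes, and a direct computation gives the Hessian the clean factored form $R(p)^{\top} D\, R(p)$, where $R(p)$ is the rigidity matrix of $(G,p)$ and $D$ is the diagonal matrix with entries $f'_{ij}(\bar d_{ij})/\bar d_{ij}$. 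These entries are strictly positive because C1 evaluated at the zero of $f_{ij}$ forces $f'_{ij}(\bar d_{ij})>0$. Strong rigidity implies infinitesimal rigidity, so $\ker R(p)$ has dimension exactly three, coinciding with the tangent space to $\mathcal{O}_p$ by Lemma \ref{LEQUIV}; the Hessian thus has exactly three zero eigenvalues with all others strictly positive, giving exponential stability transverse to the orbit.

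For (c) I invoke the Morse-Bott index formula developed in Section III.B. Given a non-strongly-rigid critical orbit, fix a triangle $(i,j,k)\subset G$ whose three vertices are collinear. Perturbing $\vec x_k$ in the direction perpendicular to the supporting line, the contributions from edges $(i,k)$ and $(j,k)$ to the Hessian collapse (the $(\vec x_k-\vec x_\ell)(\vec x_k-\vec x_\ell)^{\top}$ pieces project to zero against the perpendicular vector) to the scalar $f_{ik}(d_{ik})+f_{jk}(d_{jk})$ times the squared norm of the perturbation. Using the force-balance equation at $\vec x_k$ along the line, together with the target-distance triangle inequalities and the monotonicity in C1, I want to show that this scalar is negative for at least one such collinear triangle in $p$: intuitively, a collinear configuration cannot reconcile balance of forces with both $d_{ik}$ and $d_{jk}$ exceeding their targets, so at least one pair must be in the repulsive regime. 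This sign analysis, applied recursively along the Henneberg sequence to the first triangle that fails strong rigidity, yields a genuine negative eigenvalue of $H_p$ outside the orbit tangent space. This is the step I expect to be the main obstacle, since the sign must be controlled uniformly over the three possible cyclic orderings of the collinear vertices and over undershoot/overshoot combinations of the $d_{ij}$; the Morse-Bott index formula should provide the unified bookkeeping.

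Part 2 then follows from standard gradient-flow arguments. Lemma \ref{CONVGG} guarantees convergence of every trajectory to the set of equilibria, which by the equivariant Morse function hypothesis consists of finitely many non-degenerate critical orbits. By Part 1 the non-stable orbits are exactly the non-strongly-rigid ones, each carrying at least one negative Hessian eigenvalue transverse to the orbit; their stable manifolds are $C^1$ submanifolds of positive codimension in $P_G$, and the finite union of these manifolds has Lebesgue measure zero. Every remaining initial condition therefore flows into one of the $2^{N-2}$ strongly rigid target orbits.
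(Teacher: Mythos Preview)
Your arguments for (a), (b), (d), and Part~2 are sound. In fact, your treatment of (b) via the factorization $H_p=R(p)^{\top}DR(p)$ at a target configuration is cleaner than the paper's, which instead routes even the strongly rigid case through the independent partition and the Morse--Bott index formula before reducing to $2\times 2$ blocks.

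The gap is in (c). Your plan is to locate a collinear triangle $(i,j,k)$, perturb $\vec x_k$ orthogonally to the line, and read off the sign of $f_{ik}(d_{ik})+f_{jk}(d_{jk})$. Two obstructions prevent this from closing. First, the perturbed vertex must have degree exactly two for the quadratic form to collapse as you describe; your ``peel along the Henneberg sequence until the first degenerate triangle'' idea does find a degree-two vertex $k$ in the truncated graph $G_{k}$, but any later-added vertex attached to $k$ contributes an additional term $f'_{km}(\bar d_{km})\langle \vec e_\perp,\vec x_k-\vec x_m\rangle^2/\bar d_{km}>0$ to the full Hessian, which can kill the sign. Second, and more seriously, even when $k$ genuinely has degree two, your sign argument only works when $k$ lies \emph{between} $i$ and $j$ on the line; when $k$ lies outside the segment, force balance gives $d_{ik}f_{ik}+d_{jk}f_{jk}=0$, so $f_{ik}$ and $f_{jk}$ have opposite signs and $f_{ik}+f_{jk}$ can take either sign. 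You cannot recurse past this case, because removing a collinear degree-two vertex with $f_{ik},f_{jk}\neq 0$ does \emph{not} leave an equilibrium of the subsystem on $G\setminus\{k\}$.

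The paper resolves both issues with two pieces of machinery you only gesture at. The independent partition (Proposition~\ref{indpar}) groups edges into maximal aligned sub-frameworks $(G_i,p_i)$, each of which is itself a triangulated Laman graph and---crucially---an equilibrium of its own induced subsystem; this is what makes a clean recursion possible. The Morse--Bott index formula (Proposition~\ref{MBIF}) then gives $n_\pm(H_p)=\sum_i n_\pm(H_{p_i})$ exactly, reducing the problem to line frameworks. Finally, Proposition~\ref{LC} handles the line case by an induction that, precisely when the direct test at the last vertex fails, \emph{modifies} the interaction on the base edge via $\tilde f_{23}=f_{23}+g$ (with $g$ chosen so that $p'$ remains an equilibrium of the reduced system and the signature splits according to formula~\eqref{F1}). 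This modification step is the missing idea in your outline; without it the induction does not go through.
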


The implication of the above is that the control laws considered in this paper are essentially stabilizing the target configurations.




\subsection{Independent Partition}\label{MICF}

We now  introduce the {\bf independent partition} associated with a framework $(G,p)$.  It is a partition of the \emph{edge set} of $G$ such that, roughly speaking, edges that are aligned (with respect to the embedding $p$) are belong to the same subset. Precisely, the independent partition associated with $(G,p)$ can be defined via a Henneberg construction for $G$: given such a Henneberg sequence $\{G'_i\}$, we label the vertices of $G$ with respect to the order in which they appear in the sequence.  
The partition is then constructed in the following way:
\\
\emph{Base case.} Start with the subgraph $G'_0$ of $G$ comprised of vertices $\{1,2\}$. Since there is only one edge $(1,2)$, the partition is trivial.
 \\
\emph{Inductive step.} Now let $G'_i= (V',E')$ be the subgraph of $G$ comprised of vertices $V'= \{1,\cdots,i+2\}$ and  assume that we have partitioned $E'$ into disjoint subsets as 

\begin{equation*}
E' = E'_1\cup\cdots \cup E'_{m'}
\end{equation*} 
Suppose that in the chosen Henneberg construction, vertex $(i+3)$ links to vertices $j$ and $k$ via edges $(j,i+3)$ and $(k,i+3)$. Without loss of generality, we assume that $(j,k)\in E'_1$. Then we consider two cases:
\begin{enumerate}
\item[] {\it Case I}. If $\vec x_{j}$, $\vec x_k$ and $\vec x_{i+3}$ are aligned, then update the partition by adding $(j,i+3)$ and $(k,i+3)$ into $E'_1$.
\vspace{3pt}

\item[] {\it Case II}. If $\vec x_i$, $\vec x_j$ and $\vec x_{i+3}$ are not aligned, then update the partition  as
\begin{equation*}
E'_1\cup\cdots\cup E'_{m'}\cup\{(j,i+3)\}\cup\{(k,i+3)\}
\end{equation*}

\end{enumerate}
By following the Henneberg construction, we then derive the independent partition of $E$ associated with $(G,p)$. We note that the independent partition does not rely on the choice of the Henneberg construction~\cite{chenRMAP1}. We refer to Fig.~\ref{DECO} for an illustration.

\begin{figure}[h]
\begin{center}
\includegraphics[scale=.4]{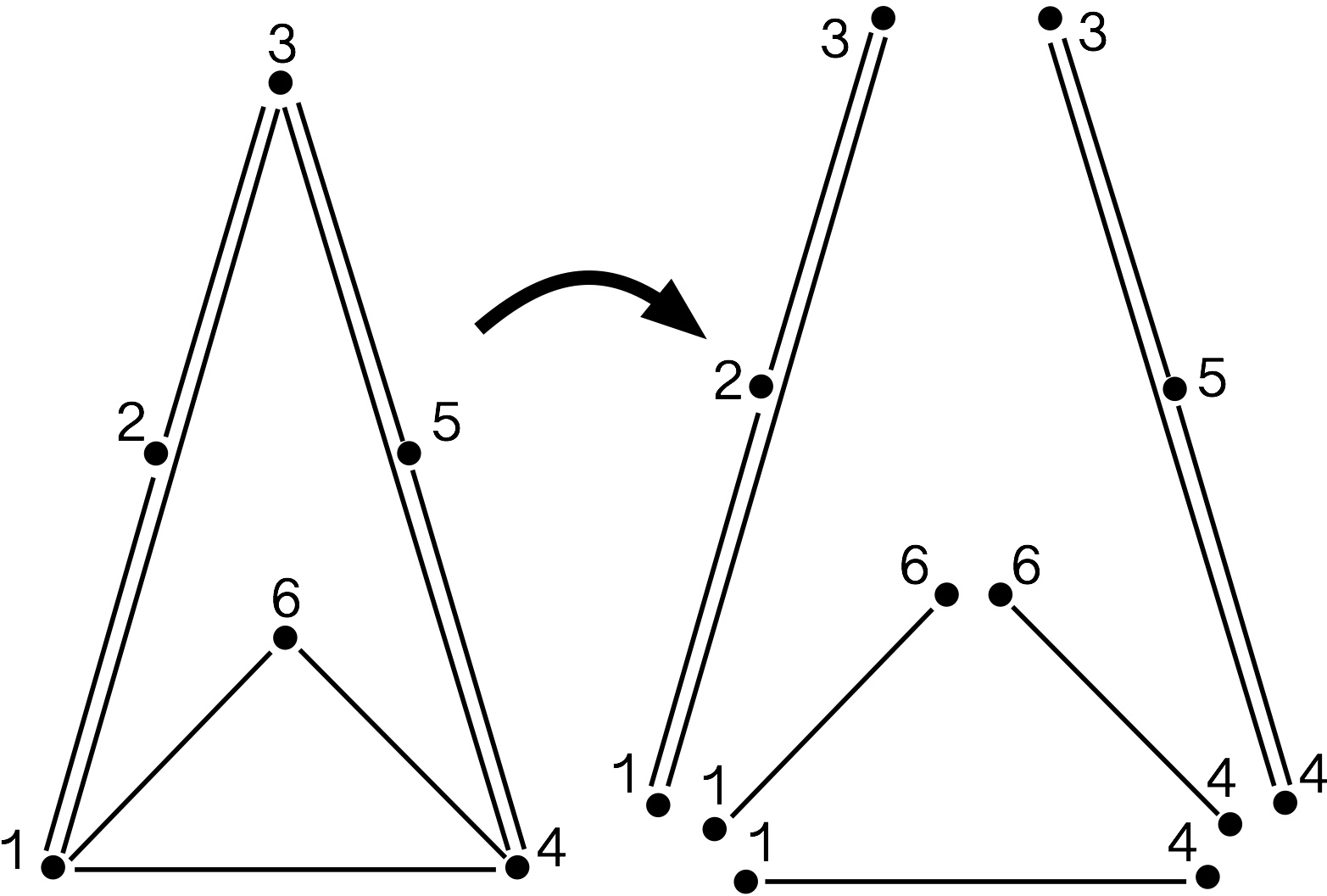}
\caption {An example of the independent partition. We see from the left figure that the graph $G$ is a triangulated Laman graph as we label the vertices with respect to a Henneberg construction, and $p$ is a planar configuration with $\vec x_1, \vec x_2,\vec x_3$ aligned, and $\vec x_3,\vec x_4,\vec x_5$ aligned. Then the independent partition of $E$ associated with $(G,p)$ is given by the right figure.}
\label{DECO}
\end{center}
\end{figure}

Let $\{E_1,\cdots, E_m\}$ be the disjoint subsets of edges associated with the independent partition for $(G, p)$. Let $V_i$ be the set of vertices incident to edges in $E_i$,  let $G_i:=(V_i,E_i)$, and let $(G_i,p_i)$ be the corresponding framework. We  summarize some  properties associated with independent partitions.

\begin{pro}\label{indpar}
Let $\{(G_i,p_i)\}$ be the frameworks associated with the independent partition for $(G,p)$. Then
\begin{enumerate}
\item\label{cond1}  Each $G_i$ is a triangulated Laman graph.
\vspace{2pt}

\item\label{cond2}  Each $(G_i,p_i)$ is a line framework. 
\vspace{2pt}

\item If there is another partition of $E$ satisfying  conditions 1) and 2), then it is a refinement of the independent partition.  In other words, the independent partition contains minimal number of subgraphs  satisfying  conditions~\ref{cond1} and~\ref{cond2}.
\vspace{2pt}

\item  If in addition $p$ is an equilibrium of system~\eqref{MODEL}, then each $p_i$ is an equilibrium of the subsystem induced by $G_i$. 
\vspace{2pt}

\item If in addition $p$ is strongly rigid, then each $p_i$ is a configuration of two agents, i.e., the edge set $E_i$ of $G_i$ is a singleton.

\end{enumerate}
\end{pro}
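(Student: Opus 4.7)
My plan is to establish all five properties by induction on the Henneberg construction of $G$, supplemented by an edge-counting argument for (3). I first handle (1), (2), (3), (5), leaving the most delicate item (4) for last.

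For (1) and (2), the inductive step is direct. In Case I, adjoining the two new edges $(j,i+3)$ and $(k,i+3)$ to the class $E'_1$ containing $(j,k)$ amounts to performing a Henneberg vertex-add on the subgraph $G'_1$: the new vertex $i+3$ is attached to two adjacent vertices $j,k$ of $G'_1$. Thus the enlarged $G_1$ stays triangulated Laman, and because $\vec x_{i+3}$ is collinear with $\vec x_j,\vec x_k$, the framework $(G_1,p_1)$ remains a line framework. In Case II, each of the two new singleton classes is a single-edge graph, which is trivially triangulated Laman (the base case of the Henneberg construction) and trivially a line framework.

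For (3), I use the fact that every triangulated Laman graph on $n$ vertices has exactly $2n-3$ edges. If one tried to merge two distinct independent classes $E_i, E_j$ into a single valid class, the merged graph would have $2(|V_i|+|V_j|)-6$ edges on $|V_i|+|V_j|-|V_i\cap V_j|$ vertices; equating this to $2(|V_i|+|V_j|-|V_i\cap V_j|)-3$ forces $|V_i\cap V_j|=3/2$, an impossibility. Hence no valid partition is strictly coarser than the independent partition, so every valid partition must refine it. Claim (5) is then immediate: strong rigidity forbids any 3-cycle of $G$ from being collinear in $p$, and since every Henneberg step creates a triangle, every step must fall in Case II, so every class is a singleton.

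Property (4) is the main technical point. My plan is to induct on the number of vertices, processing the last-added vertex $v_N$ backward. The clean case is Case II: since $\vec x_{v_j}-\vec x_{v_N}$ and $\vec x_{v_k}-\vec x_{v_N}$ are linearly independent, the two-dimensional balance at $v_N$ forces $u(d_{v_j v_N},\bar d_{v_j v_N})=u(d_{v_k v_N},\bar d_{v_k v_N})=0$. This simultaneously yields the subsystem equilibria on the two new singleton classes and shows that removing $v_N$ preserves the full equilibrium on the smaller graph, so the induction hypothesis handles the remaining classes. Case I is harder because removing $v_N$ shifts the balance at $v_j, v_k$ by a nonzero term along $L_1$. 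To handle this I plan to reorder the Henneberg sequence so that a Case II vertex is processed last whenever one exists, and to exploit a cascading effect: at equilibrium, the outermost Case II leaf vertices force $u=0$ on their incident edges, and this condition propagates inward through chains of classes that happen to share a common line at a common vertex, ultimately decoupling such chained classes and producing the desired subsystem equilibria. The main obstacle is making this reordering and cascade rigorous --- in particular, justifying that a Case II leaf is always available to be processed last (unless the entire partition collapses to a single class, in which case the unique subsystem coincides with the full system and the statement is trivial).
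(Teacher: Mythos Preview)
The paper does not prove this proposition; immediately after stating it, it says ``More details, including proofs of these statements, can be found in \cite{chenRMAP1}.'' So there is no in-paper argument to compare against, and the comments below concern the soundness of your proposal on its own.

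Your arguments for (1), (2), and (5) are correct and essentially forced by the inductive definition.

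Your argument for (3) has a real gap. The edge-count identity $|V_i\cap V_j|=3/2$ only rules out merging \emph{exactly two} independent classes into one; it says nothing about merging three or more (for odd $k$ the analogous count does not immediately give a contradiction), and, more importantly, the inference ``no valid partition is strictly coarser, hence every valid partition refines it'' is invalid because the refinement order on partitions is not total---a valid partition could a priori be incomparable to the independent one. A correct route is: first observe that whenever $\{i,j,k\}$ is a $3$-cycle of $G$ that is collinear in $p$, all three edges lie in a single independent class (the last of $i,j,k$ to appear in any Henneberg sequence is necessarily attached to the other two, so Case~I applies). Second, any triangulated Laman subgraph has its edges connected through a chain of $3$-cycles. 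Hence if $F$ is a valid class---a collinear triangulated Laman subgraph---every $3$-cycle of $F$ is collinear, so by the first observation all edges of $F$ fall into one $E_i$. This gives $F\subseteq E_i$, which is exactly the refinement statement.

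Your argument for (4) is, as you yourself flag, incomplete. The Case~II reduction is correct: linear independence of $\vec x_{v_j}-\vec x_{v_N}$ and $\vec x_{v_k}-\vec x_{v_N}$ forces both interaction terms to vanish, so removing $v_N$ preserves the equilibrium and the singleton classes are handled. The difficulty is entirely in the Case~I branch. Your proposed fix---reorder the Henneberg sequence so that a Case~II leaf is processed last---needs the lemma that, whenever the independent partition has at least two classes, some degree-$2$ vertex of $G$ forms a nondegenerate triangle in $p$ with its two neighbors. You state this as the ``main obstacle'' and do not prove it; without it the induction does not close. Note also that even granting that lemma, after removing a Case~II leaf the independent partition of the smaller framework must coincide with the restriction of the original independent partition---this is true (it follows from the well-definedness of the partition, independent of the chosen Henneberg sequence), but you should say so explicitly, since your induction relies on it.
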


More details, including proofs of these statements, can be found in \cite{chenRMAP1}.

\subsection{The Morse-Bott Index Formula}\label{ssec:morsebott}
Let $\mathcal{O}_p$ be a critical orbit of system \eqref{MODEL}.  
Let $n_+(H_p)$, $n_0(H_p)$, and $n_-(H_p)$ be the numbers of positive, zero, and negative eigenvalues of $H_p$, respectively. 
We refer to the triplet  
\begin{equation}
\vec n(H_p) = (n_+(H_p), n_-(H_p), n_0(H_p))
\end{equation} 
as the  {\it signature} of  $H_p$. By Lemma \ref{LEQUIV}, the signature of $H_{p'}$ is invariant as $p'$ varies over $\mathcal{O}_p$. Note that in terms of the signature,  a critical orbit $\mathcal{O}_p$ is {\it exponentially stable} if and only if 
\begin{equation}
\vec n(H_p) = (0, 2N -3,  3)
\end{equation} 
Define the {\bf Morse-Bott index} and {\bf co-index} of $\mathcal{O}_p$ to be $n_-(H_p)$ and $n_+(H_p)$ respectively. We now show how to evaluate these two indices of a critical orbit.

Let $G' = (V',E')$ be a subgraph of $G$. A formation control system is said to be {\it induced by} $G'$  if it is comprised of  agents $\vec x_i$ for $i\in V'$  together with  $f_{ij}$'s the interaction laws for $(i,j)\in E'$. To be precise,   the equations of motion for the subsystem induced by $G'$ are
\begin{equation}
\dot{\vec x}_{i} = \sum_{j\in V'_i}u(d_{ij}, \bar d_{ij})\cdot (\vec x_j-\vec x_i),\hspace{10pt} \forall i\in V'
\end{equation}
with $V'_i$ the neighbors of $i$ in $G'$. 
The subsystem is  a gradient flow  for the  potential function
 \begin{equation}
\Phi'(p') :=  \sum_{(i,j)\in E'} \displaystyle \int^{d_{ij}}_1 tf_{ij}(t) dt. 
\end{equation}
with $f_{ij}$ defined in~\eqref{fij}.

\begin{pro}\label{MBIF}
Let $G$ be a triangulated Laman graph.  Let $p$ be an equilibrium of system \eqref{MODEL}, and let $\{(G_i,p_i)\}^m_{i=1}$  be the frameworks associated with the independent partition for $(G,p)$. Let 
$
\Phi_{i} 
$
be the potential function of the subsystem induced by $G_i$. 
Let $H_{p_i}$ be the Hessian of $\Phi_i$ at $p_i$. Then
\begin{equation}\label{INDEXF}
\left\{
\begin{array}{l}
n_-(H_p)=\sum^m_{i=1}n_-(H_{p_i}) \vspace{3pt}\\
n_+(H_p)=\sum^m_{i=1}n_+(H_{p_i})
\end{array}
\right.
\end{equation}
This set of expressions will be referred as the {\bf Morse-Bott index formula}. 
\end{pro}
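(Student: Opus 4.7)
The plan is to prove the Morse-Bott index formula by induction on the Henneberg construction of $G$. The base case $G_0$ has a single edge and a trivial partition, so the formula $n_\pm(H_p) = n_\pm(H_{p_1})$ is immediate. In the inductive step, a vertex $v$ is attached to two existing adjacent vertices $a, b$, producing $G = G' \cup \{v\}$ with new edges $(a,v), (b,v)$. Since $\Phi = \sum_i \Phi_i$ by construction of the independent partition, $H_p = \sum_i \tilde H_{p_i}$, where $\tilde H_{p_i}$ is the extension of $H_{p_i}$ by zeros to $\R^{2N}$ (so that $n_\pm(\tilde H_{p_i}) = n_\pm(H_{p_i})$). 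The proof then splits along the two cases of the partition construction.

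In Case II ($\vec x_v$ not collinear with $\vec x_a, \vec x_b$), the partition gains two fresh singleton subsets. The force-balance condition at $\vec x_v$, namely $f_{av}(d_{av})(\vec x_v - \vec x_a) + f_{bv}(d_{bv})(\vec x_v - \vec x_b) = 0$, combined with the linear independence of the two displacement vectors, forces $f_{av} = f_{bv} = 0$, so $d_{av} = \bar d_{av}$ and $d_{bv} = \bar d_{bv}$ at equilibrium. Consequently the $2 \times 2$ diagonal block $C$ of $H_{\tilde p}$ at $\vec x_v$ equals $\phi''_{av}\, \hat e_{av} \hat e_{av}^T + \phi''_{bv}\, \hat e_{bv} \hat e_{bv}^T$ and is positive definite. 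I would then apply a Schur complement to eliminate $\vec x_v$; the key computation is the identity $\hat e_{av}^T C^{-1} \hat e_{bv} = 0$, which follows because $C$ is diagonal in the basis dual to $\{\hat e_{av}, \hat e_{bv}\}$. This identity makes the Schur-complement correction exactly cancel the new diagonal contributions at the $(\vec x_a, \vec x_b)$-block, so $H_{\tilde p}$ is congruent to $\operatorname{diag}(H_{p_{G'}}, C)$. The signatures therefore add, with $n_+(C) = 2$ matching the combined contribution of the two new single-edge subsystems.

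Case I ($\vec x_v$ collinear with $\vec x_a, \vec x_b$ on the line $\ell_1$ of the subsystem $G_1$ containing $(a,b)$) creates no new subset; instead $G_1$ grows to $G_1'$ by absorbing $v$ and the two new edges. The task reduces to showing that adding the same local perturbation $\Delta$ to either the full Hessian $H_{p_{G'}}$ or to the subsystem Hessian $\tilde H_{p_{G_1}}$ produces the same shift in signature. The clean Schur-complement trick of Case II is not directly available: the block of $\Delta$ at $\vec x_v$ has longitudinal eigenvalue $\phi''_{av} + \phi''_{bv} > 0$, but transverse eigenvalue $f_{av}(d_{av}) + f_{bv}(d_{bv})$ of indeterminate sign, since collinearity means the force balance imposes only a scalar constraint along $\ell_1$. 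Resolving this uses that contributions at the shared vertices $a,b$ from the other subsystems $G_i$, $i\ge 2$, are concentrated along lines $\ell_i \neq \ell_1$, hence live transversely to the $\ell_1$-longitudinal directions where $\Delta$ acts.

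The main obstacle is precisely Case I. My plan to overcome it is to introduce, at each shared vertex, a linear change of coordinates adapted to the distinct line directions $\{\hat e_i\}$ of the subsystems to which that vertex belongs; by Proposition~\ref{indpar} each $G_i$ is a line framework, so these directions are well-defined and pairwise distinct whenever the partition is nontrivial. This change of coordinates should block-diagonalize the relevant portion of $H_{\tilde p}$, with one block capturing the Hessian of $\Phi_1$ on $G_1'$ and the remaining blocks capturing the unchanged Hessians of $\Phi_i$ for $i\ge 2$. Once this decoupling is achieved, the signature accounting is straightforward and closes the induction.
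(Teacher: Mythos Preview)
Your Case~II argument is clean and correct: the identity $\hat e_{av}^T C^{-1}\hat e_{bv}=0$ does make the Schur complement collapse to exactly $H_{p_{G'}}$, and the signature bookkeeping is right. This is a genuinely different route from the paper for this case (the paper does not proceed by Henneberg induction at all), and it is arguably more elementary.

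The gap is in Case~I. Your induction hypothesis is Proposition~\ref{MBIF} itself applied to the smaller framework $(G',p')$, but that proposition is only claimed (and only true in this form) at an \emph{equilibrium}. When $\vec x_v$ is collinear with $\vec x_a,\vec x_b$, force balance at $v$ gives a single scalar relation along $\ell_1$ and does not force $f_{av}=f_{bv}=0$; removing $v$ therefore leaves residual forces at $a$ and $b$ (directed along $\ell_1$), so $p'$ is \emph{not} an equilibrium of the $G'$-subsystem and you cannot invoke the inductive hypothesis on $(G',p')$. The paper runs into the same obstruction in a different place (the proof of Proposition~\ref{LC}) and patches it by modifying the interaction on the edge $(a,b)$: it adds an auxiliary function $g$ to $f_{ab}$ so that the modified $G'$-system has $p'$ as an equilibrium (the three conditions imposed on $g$ are exactly what make formula~\eqref{F1} go through). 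Without an analogous device, your induction does not close.

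Your proposed rescue---a vertex-local change of coordinates ``adapted to the distinct line directions''---cannot work as stated. A single vertex can belong to three or more subsystems $G_i$ (e.g.\ a high-degree hub in a fan-like triangulated Laman graph at a generic configuration), so the line directions $\hat e_i$ at that vertex are more numerous than the ambient dimension $2$; no $2\times2$ basis change can decouple them. What the paper actually does is a \emph{global} congruence: for each nonzero eigenvector $v_{i_j}$ of $H_{p_i}$, it extends $v_{i_j}$ to a vector $u_{i_j}\in\R^{2N}$ by appending the unique infinitesimal displacement $w_{i_j}$ of the remaining agents that keeps every other sub-configuration $p_k$, $k\neq i$, rigid. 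Because each $p_k$ is an equilibrium of its own subsystem, these rigid-motion directions lie in $\ker H_k$, yielding the orthogonality~\eqref{ortho} and hence $U^\top H_p U=\operatorname{diag}(\Lambda_1,\ldots,\Lambda_m)$. This is not a per-vertex coordinate change but a construction that mixes coordinates across the whole configuration; if you want to salvage the inductive approach, you will need either this global extension idea or the $f_{23}\mapsto f_{23}+g$ trick to restore the equilibrium condition on $G'$.
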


We provide a sketch of the proof of Proposition \ref{MBIF} in the Appendix, and we refer to \cite{chenRMAP1} for a complete proof. 
Proposition~\ref{MBIF} is used to prove the following Corollary.

\begin{cor}\label{cor6}
The critical orbit $\mathcal{O}_p$ is nondegenerate if and only if each $\mathcal{O}_{p_i}$ is nondegenerate. Moreover, 
the critical orbit $\mathcal{O}_p$ is exponentially stable if and only if each $\mathcal{O}_{p_i}$ is exponentially stable. 
\end{cor}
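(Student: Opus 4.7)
The plan is to derive Corollary~\ref{cor6} by combining the Morse--Bott index formula (Proposition~\ref{MBIF}) with a simple edge-count identity on triangulated Laman graphs and the lower bound on the nullity of the Hessian supplied by Lemma~\ref{LEQUIV}. The main subtlety is that the Morse--Bott formula gives only the signatures $n_\pm$, not the nullities $n_0$, so one has to recover the null counts from the dimensional identities for $H_p$ and each $H_{p_i}$ together with a cardinality relation among the vertex sets.

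First I would record the edge count. Since each $G_i$ is itself a triangulated Laman graph (Proposition~\ref{indpar}, part~1), an easy induction along a Henneberg construction gives $|E_i|=2N_i-3$, where $N_i:=|V_i|$; the same formula applied to $G$ yields $|E|=2N-3$. Since the $E_i$ form a partition of $E$, summation gives
\[
2N - 3 \;=\; \sum_{i=1}^{m}(2N_i - 3),\qquad\text{i.e.,}\qquad 2\sum_{i=1}^{m}N_i \;=\; 2N + 3(m-1).
\]

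Next I would combine this with the Morse--Bott identities $n_\pm(H_p)=\sum_i n_\pm(H_{p_i})$ and the dimensional identities $n_+(H_p)+n_-(H_p)+n_0(H_p)=2N$ and $n_+(H_{p_i})+n_-(H_{p_i})+n_0(H_{p_i})=2N_i$. A short algebraic manipulation produces
\[
n_0(H_p) - 3 \;=\; \sum_{i=1}^{m}\bigl(n_0(H_{p_i}) - 3\bigr).
\]
Applying Lemma~\ref{LEQUIV} to each subsystem, the $SE(2)$-orbit $\mathcal{O}_{p_i}$ has dimension $3$, so $n_0(H_{p_i})\ge 3$, making every summand on the right-hand side non-negative. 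Hence $n_0(H_p)=3$ iff $n_0(H_{p_i})=3$ for every $i$, which is exactly the first equivalence of the corollary.

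For the stability claim, recall that $\mathcal{O}_p$ is exponentially stable iff $\vec n(H_p)=(0,2N-3,3)$, equivalently $n_+(H_p)=0$ together with nondegeneracy. Since $n_+(H_p)=\sum_i n_+(H_{p_i})$ is a sum of non-negative integers, $n_+(H_p)=0$ iff $n_+(H_{p_i})=0$ for every $i$. Combining this with the nondegeneracy equivalence just established, $\mathcal{O}_p$ is exponentially stable iff each $\mathcal{O}_{p_i}$ satisfies both $n_+(H_{p_i})=0$ and $n_0(H_{p_i})=3$; by the dimensional identity for $H_{p_i}$ this is precisely exponential stability of each $\mathcal{O}_{p_i}$. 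With the Morse--Bott index formula and the edge count in hand, the corollary reduces entirely to bookkeeping.
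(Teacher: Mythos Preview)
Your argument is correct and is essentially the paper's own proof: both combine Proposition~\ref{MBIF}, the Laman edge count $|E_i|=2N_i-3$, the partition identity $\sum_i|E_i|=|E|$, and the nullity bound $n_0(H_{p_i})\ge 3$ from Lemma~\ref{LEQUIV}. Your exact identity $n_0(H_p)-3=\sum_i\bigl(n_0(H_{p_i})-3\bigr)$ is just a clean repackaging of the paper's inequality chain $n_+(H_p)+n_-(H_p)\le\sum_i|E_i|=|E|$ with equality iff every $n_0(H_{p_i})=3$; the stability part is identical in both.
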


\begin{proof}
Let $|V_i|$ and $|E_i|$ be the cardinalities of $V_i$ and  $E_i$, respectively. Since each $G_i$ is a triangulated Laman graph, 
\begin{equation}
|E_i| = 2|V_i| - 3.
\end{equation} 
By Lemma \ref{LEQUIV},  we have $n_0(H_{p_i}) \ge 3$, and hence $n_+(H_{p_i}) + n_-(H_{p_i}) \le |E_i|$. On the other hand, we have  
\begin{equation}
|E| = \sum^m_{i=1} |E_i|
\end{equation} 
Thus, by Proposition \ref{MBIF}, we know that 
\begin{equation}
\left\{
\begin{array}{ll}
n_+(H_p) + n_-(H_p) \le |E| \vspace{3pt}\\
n_0(H_p) \ge 3
\end{array}
\right.
\end{equation}
The equalities hold if and only if $n_0(H_{p_i}) = 3$ for all $i$. Thus, the critical orbit $\mathcal{O}_p$ is nondegenerate if and only if each $\mathcal{O}_{p_i}$ is nondegenerate. Also by Proposition \ref{MBIF}, $n_+(H_p) = 0$ if, and only if, $n_+(H_{p_i}) = 0$ for all $i$. This competes the proof.   
\end{proof}

From Proposition \ref{MBIF} and Corollary \ref{cor6}, we see that it suffices to understand the Morse-Bott index of $H_p$ for $p$ either a strongly rigid configuration, or a line configuration. We will first  focus on strongly rigid configurations, and establish the next result.

\begin{cor}
Let $G$ be a triangulated Laman graph.  Suppose each $f_{ij}$ is in $\mathcal{U}$, with $\{\bar d_{ij}|(i,j)\in E\}$ satisfying the triangle inequalities associated with $G$. Let $p$ be an equilibrium of system \eqref{MODEL}. If $p$ is strongly rigid, then $\mathcal{O}_p$ is exponentially stable. Moreover, the distance between $\vec x_i$ and $\vec x_j$ is the target distance $\bar d_{ij}$ for all $(i,j)\in E$. 
\end{cor}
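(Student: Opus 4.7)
The plan is to reduce the claim to the case of a two-agent subsystem by invoking the independent-partition machinery, and then handle that reduced case by direct computation using the defining properties of $\mathcal{U}$.

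First, I would observe that because $p$ is strongly rigid, Proposition~\ref{indpar} (property 5) tells us that the independent partition $\{(G_i,p_i)\}_{i=1}^m$ of $(G,p)$ is as fine as possible: each $E_i$ is a singleton $\{(a_i,b_i)\}$, so every subframework consists of exactly two agents $\vec x_{a_i},\vec x_{b_i}$ connected by a single edge. By property 4 of Proposition~\ref{indpar}, each $p_i$ is an equilibrium of the two-agent subsystem induced by $G_i$, which means $u(d_{a_ib_i},\bar d_{a_ib_i})=f_{a_ib_i}(d_{a_ib_i})=0$. Since $f_{a_ib_i}\in\mathcal{U}$ has a unique zero (condition C1) located at the target distance $\bar d_{a_ib_i}$, I immediately conclude $d_{a_ib_i}=\bar d_{a_ib_i}$. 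As the $E_i$ partition $E$, this gives $d_{ij}=\bar d_{ij}$ for every $(i,j)\in E$, proving the second assertion.

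Next I would analyze the Hessian of each two-agent potential $\Phi_i(\vec x_{a_i},\vec x_{b_i})=\int_{1}^{d_{a_ib_i}}tf_{a_ib_i}(t)\,dt$ at $p_i$. Since $\Phi_i$ depends only on $r:=d_{a_ib_i}$, it descends to a function $\varphi_i(r)=\int_1^r tf_{a_ib_i}(t)\,dt$ on the one-dimensional shape space $P_{G_i}/SE(2)$. A direct calculation gives $\varphi_i'(r)=rf_{a_ib_i}(r)$ and
\begin{equation}
\varphi_i''(r)=\frac{d}{dr}\bigl(rf_{a_ib_i}(r)\bigr),
\end{equation}
which is strictly positive by condition C1. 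Hence at the equilibrium $r=\bar d_{a_ib_i}$, the unique nontrivial direction is strictly positive-definite. Since $H_{p_i}$ is a $4\times 4$ matrix whose null space contains the $3$-dimensional tangent space to $\mathcal{O}_{p_i}$ (Lemma~\ref{LEQUIV}), this forces the signature of $H_{p_i}$ to be $(1,0,3)$; equivalently, $n_+(H_{p_i})=1$, $n_-(H_{p_i})=0$, $n_0(H_{p_i})=3$. Thus each $\mathcal{O}_{p_i}$ is nondegenerate and exponentially stable.

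Finally, I would invoke Corollary~\ref{cor6}: since each $\mathcal{O}_{p_i}$ is exponentially stable, so is $\mathcal{O}_p$. The main conceptual step is the reduction in the first paragraph; once that is in place, the remaining work is essentially the one-variable analysis of $\varphi_i$, which is where condition C1 enters twice (once to locate the equilibrium uniquely at $\bar d_{ij}$ and once to certify positive-definiteness in the shape direction). The only subtlety I anticipate is verifying carefully that the Hessian contributions from the two-agent subsystems really do combine as in the Morse–Bott index formula of Proposition~\ref{MBIF}, but this is precisely what Corollary~\ref{cor6} packages for us, so no further work should be needed here.
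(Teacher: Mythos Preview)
Your proposal is correct and follows essentially the same route as the paper: reduce via the independent partition (Proposition~\ref{indpar}) to two-agent subsystems, use condition~C1 to see each such equilibrium sits at the target distance and has the stable one-dimensional Hessian signature, and then reassemble via the index machinery. The only cosmetic differences are that the paper writes out the explicit $4\times4$ matrix $H_{p_i}$ rather than passing to the one-variable quotient $\varphi_i$, and it invokes Proposition~\ref{MBIF} directly to count $n_-(H_p)=2N-3$ instead of citing Corollary~\ref{cor6}; be aware of a sign-convention discrepancy in the paper, which records the nontrivial eigenvalue as $n_-(H_{p_i})=1$ where your (standard) computation gives $n_+(H_{p_i})=1$.
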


\begin{proof}
Let $\{(G_i,p_i)\}^{m}_{i=1}$ be the frameworks associated with the independent partition for $(G,p)$.  Since $p$ is strongly rigid, 
 each $p_i$  consists of only two agents by Proposition \ref{indpar},  so then $m = 2N - 3$.  Also, by Proposition \ref{indpar}, each $p_i$ is an equilibrium, and hence $f_{ij}(d_{ij})=0$, which implies $d_{ij} = \bar d_{ij}$. 
 We will now compute the signature of $H_{p_i}$. Suppose $p_i$ consists of agents $\vec x_{j}$ and $\vec x_{k}$, and by Lemma \ref{LEQUIV}, we may rotate and/or translate $p$ so that both $\vec x_{j}$ and $\vec x_{k}$ are on the $x$-coordinate. 
Then $H_{p_i}$ is a $4$-by-$4$ matrix given by
\begin{equation}
H_{p_i} = 
\bar d_{jk}f'_{jk}(\bar d_{jk})
\begin{pmatrix}
-1 & 0 & 1 & 0\\
0 & 0 & 0 & 0\\
1 & 0 & -1 & 0\\
0 & 0 & 0 & 0
\end{pmatrix} 
\end{equation} 
On the other hand, we have
\begin{equation}
\frac{d}{dx}(xf_{jk}(x))\Big |_{x = \bar d_{jk}} = \bar d_{jk} f'_{jk}(\bar d_{jk}) >0 
\end{equation}
Thus, 
$n_-(H_{p_i}) = 1$. Since this holds for all $i$, we then have 
\begin{equation}
n_-(H_p) = \sum^m_{i=1}n_-(H_{p_i}) = m = 2N - 3
\end{equation} 
and hence, by the argument of dimensionality, we have  
\begin{equation}
\vec n(H_p) = (0,2N-3,3)
\end{equation} 
Thus, $\mathcal{O}_p$ is exponentially stable. This completes the proof.  
\end{proof}


\subsection{Proof of The Main Theorem}
We first focus on the case where $p\in P_{G}$ is a critical line configuration, and evaluate the signature of $H_p$. In particular, we will establish the next result.

\begin{pro}\label{LC}
Let $G$ be a triangulated Laman graph of $N$ vertices with $N> 2$. Suppose that each $f_{ij}$  is in $\mathcal{U}$, with $\{\bar d_{ij}|(i,j) \in E \}$ satisfying the triangle inequalities associated with $G$. Let $\mathcal{O}_p$ be a nondegenerate critical orbit of line configurations. Then, $n_+(H_p)>0$. 
\end{pro}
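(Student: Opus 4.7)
The approach is to exploit the reflection symmetry $\eta\mapsto-\eta$ of the line configuration (placing $p$ on the $x$-axis) to block-diagonalize $H_p$ into an along-line block $H_p^{\parallel}$ and a transverse block $H_p^{\perp}$. A second-order expansion of $\Phi$ shows that $H_p^{\parallel}$ is a graph Laplacian with edge weights $f_{ij}(d_{ij})+d_{ij}f_{ij}'(d_{ij})$, strictly positive by condition C1, and hence contributes nothing to $n_+(H_p)$ beyond the expected along-line translation mode. The transverse block $H_p^{\perp}$ is a \emph{signed} graph Laplacian $\Omega$ whose weights are the values $f_{ij}(d_{ij})$ themselves; these may take either sign depending on whether $d_{ij}$ exceeds or falls short of $\bar d_{ij}$. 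The null space of $\Omega$ is exactly the two-dimensional span of $\vec 1$ (perpendicular translation) and $\vec x=(x_1,\dots,x_N)$ (infinitesimal rotation, which lies in the kernel precisely because $p$ is an equilibrium), so it suffices to exhibit a test vector $\eta$ for which
\[
\eta^\top\Omega\,\eta \;=\; \sum_{(i,j)\in E} f_{ij}(d_{ij})(\eta_i-\eta_j)^2
\]
has the sign that witnesses $n_+(H_p)>0$.

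I would produce this witness by induction along a Henneberg construction of $G$. The base case $N=3$ contains the crux of the sign analysis: $G$ is a single triangle with the three agents collinear in some order, say $i,v,j$, so that $d_{ij}=d_{iv}+d_{jv}$. Force balance at $v$ yields $f_{iv}(d_{iv})\,d_{iv}=f_{jv}(d_{jv})\,d_{jv}$ (so $f_{iv},f_{jv}$ share a common sign), while force balance at $i$ yields $f_{iv}(d_{iv})\,d_{iv}=-f_{ij}(d_{ij})\,d_{ij}$ (so $f_{ij}$ has the opposite sign). The triangle inequality $\bar d_{ij}<\bar d_{iv}+\bar d_{jv}$ on the targets rules out the pattern $f_{iv},f_{jv}>0$: that would give $d_{iv}>\bar d_{iv}$, $d_{jv}>\bar d_{jv}$, hence $d_{ij}=d_{iv}+d_{jv}>\bar d_{ij}$ and therefore $f_{ij}>0$, contradicting the opposite-sign relation. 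Only the pattern $f_{iv},f_{jv}<0$ survives, and the indicator $\eta=e_v$ then yields $\eta^\top\Omega\,\eta=f_{iv}+f_{jv}$ of the sign required.

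For the inductive step, let $v$ be the last vertex added by the Henneberg construction, attached to an edge $(i,j)$ of $G':=G\setminus\{v\}$, and write $p':=p|_{G'}$. The sub-triangle $\{i,j,v\}$ admits a case analysis analogous to the base case (whether $v$ lies between or outside $\{i,j\}$ on the line, and the sign combinations of $f_{iv},f_{jv}$ imposed by force balance at $v$), but with a crucial caveat: the two vertices $i,j$ are in general not at force equilibrium within $\{i,j,v\}$ alone, since they interact with the rest of $G$. In the \emph{favorable} sub-case one still has $f_{iv}+f_{jv}<0$ and $\eta=e_v$ alone completes the induction; in the \emph{unfavorable} sub-cases the local witness $e_v$ has the wrong sign and must be combined with a witness $\eta'$ on the vertex set of $G'$ obtained from the inductive hypothesis, scaled so that its contribution dominates the cross terms introduced by the two new edges $(i,v),(j,v)$. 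The main obstacle is that $p'$ is itself not in general an equilibrium of the subsystem induced by $G'$, since $v$ exerts nonzero forces on $i$ and $j$ in the full system; invoking the inductive hypothesis therefore requires passing to a modified subsystem on $G'$ in which the target distances on the edges incident to $\{i,j\}$ are perturbed to absorb the missing forces. Establishing that such a modification preserves both the triangle-inequality hypothesis and the triangulated-Laman structure of $G'$, and that the resulting combined witness retains the sign required, is the technical heart of the proof.
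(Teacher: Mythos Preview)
Your skeleton matches the paper's proof closely: the block-diagonalization into $A_p$ (your $H_p^{\parallel}$) and $B_p$ (your $-\Omega$), the observation that $A_p$ contributes nothing to $n_+$, the base case via force balance plus the triangle inequality on the targets, and an induction along a Henneberg construction with a favorable/unfavorable split on the sign of $f_{iv}+f_{jv}$. That much is essentially identical.

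Where you diverge from the paper is in the mechanics of the unfavorable inductive step, and the paper's execution is both simpler and sidesteps the concerns you raise. First, the paper does \emph{not} perturb target distances on edges incident to $\{i,j\}$. It modifies only the interaction law on the single edge $(i,j)$, replacing $f_{ij}$ by $\tilde f_{ij}=f_{ij}+g$ with $g$ chosen so that (i) $g(\bar d_{ij})=0$, (ii) $d_{ij}g(d_{ij})$ equals the common value $d_{iv}f_{iv}(d_{iv})=d_{jv}f_{jv}(d_{jv})$, and (iii) a derivative condition holds at $d_{ij}$. Condition (ii) is exactly what makes $p'$ an equilibrium of the modified $G'$-subsystem (the single added force on edge $(i,j)$ absorbs the removed forces from $v$ on both $i$ and $j$ simultaneously, thanks to force balance at $v$), while condition (i) keeps $\bar d_{ij}$ fixed, so the triangle-inequality hypothesis is preserved for free. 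Your worry about maintaining the triangle inequalities is an artifact of choosing to perturb targets rather than the law.

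Second, the paper does not combine a witness from $G'$ with $e_v$. Instead it proves, via an explicit congruence and Sylvester's law of inertia, the signature relation
\[
\vec n(B_p)=\vec n(B_{p'})+\operatorname{sgn}\bigl(-f_{iv}(d_{iv})-f_{jv}(d_{jv})\bigr),
\]
which in the unfavorable case reads $n_+(B_p)=n_+(B_{p'})$, so the inductive hypothesis transfers directly. Your proposal to ``scale $\eta'$ so that its contribution dominates the cross terms'' is problematic as stated: the quadratic form is homogeneous of degree two, so scaling $\eta'$ alone cannot change the sign balance between the inherited term and the two new-edge terms. One would need instead a specific affine combination of $\eta'$ and $e_v$, which is effectively what the congruence argument encodes.
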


It is computationally convenient to collect the $x$-coordinates of agents $x_1$ to $x_N$ in the first $N$ entries of a vector, and the $y$-coordinates in the last $N$ entries. To this end, we
let $\vec a$ and $\vec b$ be two vectors in $\mathbb{R}^N$ containing $x$-coordinates and $y$-coordinates of agents respectively, i.e., 
\begin{equation}
\left\{
\begin{array}{ll}
\vec a := (x_1,\cdots,x_N)\\ 
\vec b := (y_1,\cdots,y_N)
\end{array}
\right.
\end{equation}
We then re-arrange entries of a configuration $p$ so that 
\begin{equation}
p = (\vec a,\vec b)
\end{equation}
By Lemma \ref{LEQUIV}, we can assume, without loss of generality, that the line configuration $p$ is aligned with the $x$-axis, or equivalently that $\vec b = 0$. An advantage of re-arranging entries is that the Hessian $H_p$ can now be expressed as a block-diagonal matrix given by
\begin{equation}\label{HESS}
H_p =
\begin{pmatrix}
A_p & 0\\
0 & B_p
\end{pmatrix}
\end{equation}
where $A_p$ and $B_p$ are $N$-by-$N$ symmetric zero-row/column-sum matrices. The $ij$-th entry, for $i\neq j$, of $A_p$ and $B_p$ are given by  
\begin{equation}\label{GMAT}
A_{ij}:=
\left\{
\begin{array}{ll}
\frac{d}{dx} (xf_{ij}(x))\big |_{x = d_{ij}} & \text{if }(i,j)\in E\vspace{3pt}\\
0 & \text{otherwise}
\end{array}
\right.
\end{equation}
and       
\begin{equation}\label{GMAT'}
B_{ij}:=
\left\{
\begin{array}{ll}
f_{ij}(d_{ij}) & \text{if }(i,j)\in E\vspace{3pt}\\
0 & \text{otherwise}
\end{array}
\right.
\end{equation}
the diagonal entries of $A_p$ and $B_p$ are then determined by the conditions that their row/column-sum are zeros. 

By Lemma \ref{LEQUIV}, the null space of $H_p$ contains  $T_p\mathcal{O}_p$, i.e., the  tangent space of $\mathcal{O}_p$ at $p$ which is the vector space spanned by the next three vectors in $\mathbb{R}^{2N}$:
\begin{equation}\label{abp}
\left\{
\begin{array}{ll}
\vec t_a: = (\vec e, 0)\\
\vec t_b: = (0,\vec e)\\
\vec r_p: = (0, \vec a)
\end{array}
\right.
\end{equation} 
where $\vec e$ is a vector of all ones in $\mathbb{R}^N$. 
The first two vectors $\vec t_a$ and $\vec t_b$ represent  infinitesimal motions of translations of $p$ along the $x$-coordinate and the $y$-coordinate, respectively. The third vector $\vec r_p$ represents an infinitesimal motion of clockwise rotation of $p$ around the origin. 
It is also straightforward to check that all three vectors  are in the null space of $H_p$. Now suppose the critical orbit $\mathcal{O}_p$ is nondegenerate; then by Lemma \ref{LEQUIV} the null space of $H_p$ should only be spanned by $\vec t_a$, $\vec t_b$ and $\vec r_p$. Further, by \eqref{HESS} and \eqref{abp}, we see that the null space of $A_{p}$ is spanned by $\vec e$, and the null space of $B_p$ is spanned by $\vec e$ and $\vec a$.

We are now in a position to prove Proposition \ref{LC}.  
\vspace{5pt}

\hspace{8pt}{\it Proof of Proposition \ref{LC}}: \hspace{4pt} We prove the proposition by showing that $n_+(B_p) >0$. The proof will be completed by induction on the number of agents. First consider the base case $N= 3$. Assume that $p$ is aligned with the $x$-coordinate with $x_2<x_1<x_3$, i.e., agent $\vec x_1$ lies in between $\vec x_2$ and $\vec x_3$.  We now show that the matrix $n_+(B_p)>0$. Since $p$ is an equilibrium, then
\begin{equation}\label{INT3}
d_{12}f_{12}(d_{12}) = d_{13} f_{13}(d_{13})= -d_{23}f(d_{23})
\end{equation} 
We now show that these three numbers are all negative. Suppose not, then we have 
\begin{equation}
\left\{
\begin{array}{l}
d_{12} \ge \bar d_{12} \\
d_{13} \ge \bar d_{13} \\
d_{23} \le \bar d_{23}
\end{array}
\right.
\end{equation} 
This holds because the function $xf_{ij}(x)$ is strictly monotonically increasing by condition C1. On the other hand, we have 
\begin{equation}
d_{12}+d_{13}=d_{23}
\end{equation}
which implies that 
\begin{equation}
\bar d_{12} + \bar d_{13}\le \bar d_{23}
\end{equation}
This then violates the triangle inequality. Thus, the three numbers in \eqref{INT3} are all negative. In particular, both $f_{12}(d_{12})$ and $f_{13}(d_{13})$ are negative. Let 
$
\vec e_1 := (1,0,0)
$ 
be a test vector. Then by computation, we have 
\begin{equation}
\langle \vec e_1,B_{p} \vec e_1\rangle  = -f_{12}(d_{12}) - f_{13}(d_{13}) > 0
\end{equation}  
Thus, $n_+(B_p) > 0$
, and hence $n_+(H_p) > 0$.

Now apply the technique of induction: We assume the fact that if $\mathcal{O}_p$ is nondegenerate, then  $n_+(B_p) > 0$ for any $N\le n$ with $n\ge 3$, and we prove for the case $N = n+1$. Fix a Henneberg construction of $G$, and without loss of generality, assume that $1$ is the last vertex joining $G$ via edges $(1,2)$ and $(1,3)$ to vertices $2$ and $3$, respectively. We still assume that $p$ is aligned with the $x$-coordinate. Then there are two cases regarding the position of agent $\vec x_1$: either $(x_1 - x_2)(x_1 - x_3)< 0$ or $(x_1- x_2)(x_1- x_3)>0$, depending on whether or not agent $\vec x_1$ lies in between $\vec x_2$ and $\vec x_3$. For simplicity, we will only focus on the former case, and assume 
$
x_2 < x_1 < x_3
$. Similar analysis can be applied to the other case as well.

Let $\vec e_1,\cdots, \vec e_{n+1}$ be the standard basis of $\mathbb{R}^{n+1}$. Similarly, we have 
\begin{equation}\label{GE1}
\langle \vec e_1, B_p\vec e_1\rangle = -f_{12}(d_{12}) - f_{13}(d_{13}) 
\end{equation}
We now show that  if $\langle \vec e_1, B_p\vec e_1\rangle \ge 0$, then  $n_+(B_p)>0$. Since $\mathcal{O}_p$ is nondegenerate, the null space of $B_p$ is spanned by $\vec e$ and $\vec a$ only. On the other hand, the three vectors $\vec e_1$, $\vec e$ and $\vec a$ are linearly independent, so $B_p \vec e_1\neq 0$. Let $\lambda_1,\cdots, \lambda_{n-1}$ be the non-zero eigenvalues of $B_p$, and let $\vec v_i$ be the unit-length eigenvector of $B_p$ corresponding to $\lambda_i$, then 
\begin{equation}
\langle \vec e_1, B_p\vec e_1\rangle = \sum^{n-1}_{i=1} \lambda_i \langle\vec e_1, \vec v_i\rangle^2 \ge 0
\end{equation}
Since there exists some $i$ with $\langle \vec e_1,\vec v_i\rangle\neq 0$, there must exist at least one positive eigenvalue of $B_p$.    

So in the rest of the proof, we only consider the case $\langle \vec e_1, B_p\vec e_1\rangle <0$. Since $\vec x_1$ is balanced in $p$, we have
\begin{equation} 
d_{12} f_{12}(d_{12}) = d_{13}f_{13}(d_{13})
\end{equation}
Then by expression \eqref{GE1}, both $f_{12}(d_{12})$ and $f_{13}(d_{13})$ are positive. In particular, we have 
\begin{equation}
d_{23} = d_{12} + d_{13} > \bar d_{12} + \bar d_{13} > \bar d_{23}
\end{equation}
Now choose a function $g\in C^1(\R_+,\R)$ such that it satisfies the next three conditions 
\begin{itemize}
\item[1.] $g$ satisfies condition C1, and $g(\bar d_{23}) = 0$ 
\vspace{2pt}
\item[2.] $d_{23}g(d_{23}) = d_{12} f_{12}(d_{12}) = d_{13}f_{13}(d_{13})$
\vspace{2pt}
\item[3.] $\frac{d}{dx}(xg(x))\big |_{x = d_{23}} = A_{12} A_{13}/(A_{12} + A_{13})$
\end{itemize} 
\vspace{2pt} 
with $A_{ij}$ the $ij$-th entry of $A_p$  defined in \eqref{GMAT}.

We introduce function $g$ because of the following fact: Let $G' = (V',E')$ be the subgraph of $G$ induced by vertices $V':=\{2,\cdots,n+1\}$, and let $(G',p')$ be the corresponding framework. If we replace $f_{23}$ with 
\begin{equation}
\tilde f_{23} := f_{23} + g,
\end{equation}
then $p'$ is an equilibrium of the sub-system induced by $G'$, with the modification that $f_{23}$ is replaced by $\tilde f_{23}$. To see this, it suffices to check that agents $\vec x_2$ and $\vec x_3$ in $p'$ are still balanced. But this holds because of the second condition on $g$. We note that the first condition on $g$ implies that $\tilde f_{23}\in \mathcal{U}$ with $\tilde f_{23}(d_{23}) = 0$. The third condition is a technical condition, and will be justified later.  Also note that $G'$ is a triangulated Laman graph, and $\{\bar d_{ij}| (i,j)\in E'\}$ satisfies the triangle inequalities associated with $G'$. Thus, we can apply the technique of induction on the critical orbit $\mathcal{O}_{p'}$ of the modified sub-system.

Let $\Phi'$ be the potential function associated with the modified sub-system induced by $G'$. Let $H_{p'}$ be the Hessian of $\Phi'$ at $p'$. Similarly, we can express $H_{p'}$ as a block-diagonal matrix as
\begin{equation}
H_{p'} = 
\begin{pmatrix}
A_{p'} & 0\\
0 & B_{p'}
\end{pmatrix}
\end{equation}
with $A_{p'}$ and $B_{p'}$ defined in the same way as $A_{p}$ and $B_{p}$ but with respect to $G'$. Also we replace $f_{23}$ and $f'_{23}$ with $\tilde f_{23}$ and $\tilde f'_{23}$, respectively. We will now introduce a formula that relates the signature of $H_p$ to the signature of $H_{p'}$. First we introduce a vector-valued sign function as
\begin{equation}
sgn(x):=\left\{
\begin{array}{ll}
(1,0,0) & \text{ if } x > 0\\
(0,1,0) & \text{ if } x < 0\\
(0,0,1) & \text{ if } x = 0
\end{array}\right.
\end{equation} 
and recall that $\vec n(H) = (n_+(H),n_-(H),n_0(H))$ is defined as the signature of $H$. Then, 
\begin{equation}\label{F1}
\left\{
\begin{array}{l}
\vec n(A_p) = \vec n(A_{p'})
+ sgn(-A_{12} - A_{13}) \vspace{3pt}\\
\vec n(B_p) = \vec n(B_{p'})
+sgn(-B_{12}-B_{13})
\end{array}\right.
\end{equation} 
where $A_{ij}$ and $B_{ij}$ are entries of $A_p$ and $B_{p}$, respectively (the validity of this formula requires the second and the third conditions on $g$). The proof of the formula is provided in the Appendix.

From \eqref{F1}, we see that if $\mathcal{O}_p$ is nondegenerate in the original system, then so is $\mathcal{O}_{p'}$ in the modified sub-system. Thus, by induction we have $n_+(B_{p'})>0$. Then applying \eqref{F1} again, we conclude that $n_+(B_p)>0$. This then completes the proof. \hfill{\QED}
\vspace{5pt}

With the results above, we will now return to proof Theorem \ref{MAIN}. Let $p$ be an equilibrium of system \eqref{MODEL}. If $\mathcal{O}_p$ is strongly rigid, then $\mathcal{O}_p$ is (exponentially) stable as we have shown at the end of section~\ref{ssec:morsebott}. So we assume now that $\mathcal{O}_p$ is not strongly rigid, and we show that $\mathcal{O}_p$ is unstable.

Let $p_1,\cdots,p_m$ be the line sub-configurations of $p$ associated with the independent partition, and without loss of generality, we assume that  $p_1$ contains at least three agents. Since $\mathcal{O}_p$ is a nondegenerate critical orbit, then so is $\mathcal{O}_{p_1}$ by Corollary \ref{cor6}, and hence the co-index $n_+(H_{p_1})$ must be positive by Proposition \ref{LC}. We then apply the Morse-Bott index formula, i.e.,  
\begin{equation}
n_+(H_p) = \sum^m_{i=1}n_+(H_{p_i})
\end{equation} 
to conclude that the Hessian matrix $H_p$ also has at least one positive eigenvalue. So we have shown that  a critical orbit is stable if and only if it is strongly rigid. The set of stable critical orbits is characterized by the condition that $d_{ij}=\bar d_{ij}$ for all $(i,j)\in E$, and hence there are as many as $2^{N-2}$ stable critical orbits  in total. The convergence of system \eqref{MODEL} is implied by Lemma \ref{CONVGG}.

\section{Conclusions}
Design of control laws that only stabilize the target configurations of a formation is known to be a challenging problem. Indeed, the conjunction of the decentralization constraints and the nonlinear nature of the dynamics lead to the appearance of undesired equilibria in the system. Counting these equilibria in general is a difficult and open problem, let alone characterizing them. In this paper,  we have provided a partial solution by exhibiting a class of undirected graphs and control laws for which only desired configurations are stable.  We have furthermore derived results characterizing the extremal points of a class of  equivariant Morse functions that might be of an independent interest. 

 \bibliographystyle{unsrt}
\bibliography{FC}

\section*{Appendix}

\subsection{Sketch of the proof of Proposition \ref{MBIF}}
The Hessian matrix $H_p$ considered here is with respect to the  arrangement $p = (\vec x_1,\cdots,\vec x_N)$. Let $H_i$ be a $2N$-by-$2N$ symmetric matrix derived by adding zero rows and columns to $H_{p_i}$. The $(2j-1)$-th and $2j$-th rows/columns of $H_i$  are zero rows/columns if $j$ is not a vertex of $G_i$, and if we remove these zero rows and columns, then we recover $H_{p_i}$. It should be clear that $n_+(H_i) = n_+(H_{p_i})$ and $n_-(H_i) = n_-(H_{p_i})$. 
We then express $H_p$ as 
\begin{equation}
H_p = \sum^m_{i=1} H_i
\end{equation}     
It now suffices to show that 
\begin{equation}\label{INDEXF}
\left\{
\begin{array}{l}
n_-(H_p)=\sum^m_{i=1}n_-(H_{i}) \vspace{3pt}\\
n_+(H_p)=\sum^m_{i=1}n_+(H_{i})
\end{array}
\right.
\end{equation}
Each $H_{p_i}$ has at least three zero eigenvalues. Let $\lambda_{i_1},\cdots, \lambda_{i_{l_i}}$ be the other eigenvalues of $H_{p_i}$, and for simplicity, assume that they are all nonzero. It should be clear that $l_i = |E_i|$, and hence $\sum^m_{i=1} l_i = 2N - 3$.  Suppose for the moment that for each $\lambda_{i_j}$, we can find a vector $u_{i_j}\in \mathbb{R}^{2N}$ so that the ensemble of these vectors satisfies the following condition:
\begin{equation}\label{ortho}
\langle u_{j'_{k'}},  H_{i} u_{j_k}\rangle = \delta_{ij}\delta_{jj'}\delta_{kk'}\lambda_{j_k}  
\end{equation} 
with $\delta$ the Kronecker delta. Then we can define a  matrix $U$ with its column vectors $u_{i_j}$'s such that 
\begin{equation}
U^{\top} H_p U = \operatorname{diag}(\Lambda_1,\cdots, \Lambda_m)
\end{equation}
with $\Lambda_i:= \operatorname{diag}(\lambda_{i_1},\cdots, \lambda_{i_{l_i}})$. Thus, Proposition \ref{MBIF} immediately follows from the Sylvester's Law of inertia.

We will now describe how we construct the vector $u_{i_j}$. First consider a simple example: Suppose we have a nondegenerate triangle $\vec x_1$, $\vec x_2$ and $\vec x_3$ on the plane, then for sufficiently small perturbation $\delta \vec x_i$ of agent $\vec x_i$ for $i=1,2$, we can find a unique displacement $\delta \vec x_3$ of $\vec x_3$ such that  we can maintain the distances $d_{12}$ and $d_{23}$ by following this displacement, i.e., 
\begin{equation}
\|\vec x_3 - \vec x_i\| = \|(\vec x_3 + \delta \vec x_3) - (\vec x_i + \delta \vec x_i)\|, \hspace{10pt} \forall i =1,2 
\end{equation}
In fact, if we let $\rho$ be the map 
\begin{equation}
\rho: (\delta \vec x_1,\delta \vec x_2) \mapsto \delta \vec x_3 
\end{equation}
then by the inverse function theorem, the map $\rho$ is well-defined over a small neighborhood of the origin in $\R^4$. Moreover, $\rho$ is smooth and $\rho(0) = 0$. Thus, we can consider the derivative map 
\begin{equation}
d\rho_0: \R^4 \to \R^2
\end{equation}
at the origin, which describes the infinitesimal motion of the displacement of $\vec x_3$ with respect to the infinitesimal motions of perturbations of $\vec x_1$ and $\vec x_2$. 
This geometric fact can be generalized to an arbitrary framework $(G,p)$ with $G$ a triangulated Laman graph. Precisely, we let $\{(G_i,p_i)\}^m_{i=1}$ be the frameworks associated with the independent partition for $(G,p)$. Then  we can perturb one sub-configuration $p_i$ while preserving the shapes of the others \cite{chenRMAP1}, i.e., if we let $\delta p_i$ be the perturbation of $p_i$, then there is a unique displacement $\delta p_{-i}$ for the rest agents  $p_{-i}$ such  that $p_{-i} + \delta p_{-i}$ can be derived by rotating and/or translating of $p_i$ in $\R^2$. The map
\begin{equation}
\rho_i:\delta p_i \mapsto \delta p_{-i}
\end{equation}
is well defined over a small neighborhood of the origin, and similarly $\rho$ is smooth and $\rho(0) = 0$. Thus, we can still consider the derivative map $d\rho_0$ which describes the infinitesimal version of the displacement  of $p_{-i}$.

We now return to construction of the vector $u_{i_j}$. Fix an $i$, and let $v_{i_j}$ be the unit-length eigenvector of $H_{p_i}$ corresponding to eigenvalue $\lambda_{i_j}$.  We now treat $v_{i_j}$ as the infinitesimal version of the perturbation of $p_i$, and correspondingly we let
\begin{equation}
w_{i_j}:= d\rho_0(v_{i_j})
\end{equation} 
be the infinitesimal version of the displacement of $p_{-i}$.  
For simplicity, we assume that $p_i$ consists of the first $k$ agents, then we construct $u_{i_j}$ by concatenating $v_{i_j}$ and $w_{i_j}$ as
\begin{equation}
u_{i_j} := (v_{i_j},w_{i_j})
\end{equation}
We then show in \cite{chenRMAP1} that the ensemble of the vectors $u_{i_j}$ satisfies the desired condition described by \eqref{ortho}.  

\subsection{Proof of formula \eqref{F1}}
Let $\vec v_1,\cdots, \vec v_n\in\R^n$ be the unit-length eigenvectors of $A_{p'}$ corresponding to eigenvalues $\lambda_1,\cdots, \lambda_n$. We now define, for each $\vec v_i$, a vector $\vec v^*_i\in \R^{n+1}$ as follows. Let $v_{ij}$ be the $j$-th entry of $\vec v_i$; then
\begin{equation}
\alpha_i :=  \frac{A_{12}v_{i1} + A_{13} v_{i2}}{A_{12} + A_{13}} 
\end{equation} 
Note that this is well defined because by condition C1, $A_{12}$ and $A_{13}$ are always positive. Now let 
\begin{equation}
\vec v^*_{i} := (\alpha_i, \vec v_i)  
\end{equation}
Then by using the third condition on $g$, we can get 
\begin{equation}
A_{p} \vec v^*_i = \lambda_i (0,\vec v_i)
\end{equation}
Now let $Q_A:= (\vec e_1,\vec v^*_1,\cdots, \vec v^*_n)$ be an $(n+1)$-by-$(n+1)$ matrix; then 
\begin{equation}
Q^{\top}_A A_{p} Q_A= \operatorname{diag}(-A_{12}-A_{13}, \lambda_1,\cdots, \lambda_n) 
\end{equation}
By Sylvester's Law of inertia, we have 
\begin{equation}
\vec n(A_p) = \vec n(A_{p'}) + sgn(-A_{12} - A_{13})
\end{equation}
The analysis for the other part is similar. 
Let $\vec u_1,\cdots, \vec u_n\in\R^n$ be the unit-length eigenvectors of $B_{p'}$ corresponding to eigenvalues $\mu_1,\cdots, \mu_n$. For each $\vec u_i$, we let
\begin{equation}
\beta_i :=  \frac{(x_3- x_1)u_{i1} + (x_1 - x_2)u_{i2} }{x_3 - x_2} 
\end{equation} 
This is also well defined because $\vec x_2$ and $\vec x_3$ are  on the $x$-coordinate, but at different positions.  Now let 
\begin{equation}
\vec u^*_{i} := (\beta_i, \vec u_i)  
\end{equation}
Then by using the second condition on $g$, we can get 
\begin{equation}
B_{p} \vec u^*_i = \mu_i (0,\vec u_i)
\end{equation}
Letting $Q_B:= (\vec e_1,\vec u^*_1,\cdots, \vec u^*_n)$, it then follows that 
\begin{equation}
Q^{\top}_B B_{p} Q_B= \operatorname{diag}(-B_{12}-B_{13}, \mu_1,\cdots, \mu_n) 
\end{equation}
This then shows that
\begin{equation}
\vec n(B_p) = \vec n(B_{p'}) + sgn(-B_{12} - B_{13})
\end{equation}
which completes the proof.

\end{document}